\title{Near-Linear Time Algorithms for Streett Objectives in Graphs and MDPs}
\author{Krishnendu Chatterjee}{IST Austria
}{krish.chat@ist.ac.at}{}{}
\author{Wolfgang Dvo\v{r}\'{a}k}{Institute of Logic and Computation, TU
	Wien
}{dvorak@dbai.tuwien.ac.at}{}{}
\author{Monika Henzinger}{Theory and Application of Algorithms,
	University of Vienna
}{monika.henzinger@univie.ac.at}{}{}
\author{Alexander Svozil}{Theory and Application of Algorithms,
	University of
	Vienna
}{alexander.svozil@univie.ac.at}{}{}
\authorrunning{K.\@ Chatterjee, W.\@ Dvo\v{r}\'{a}k, M.\@ Henzinger, A.\@ Svozil}
\keywords{model checking, graph games, Streett games}
\newcommand{\para}[1]{\smallskip\noindent\emph{#1}}
\newcommand{\SCCs}[1]{\mathtt{SCCs}(#1)}
\newcommand{\Streett}{\mathsf{Streett}}
\newcommand{\Remove}[1]{\mathtt{Remove}(#1)}
\newcommand{\Bad}[1]{\mathtt{Bad}(#1)}
\newcommand{\Construct}[1]{\mathtt{Construct}(#1)}
\newcommand{\add}[1]{\mathtt{add}(#1)}
\newcommand{\pull}{\mathtt{deque}()}
\newcommand{\ls}{\langle}
\newcommand{\rs}{\rangle}
\newcommand{\D}{\mathcal{D}}
\newcommand{\A}{\mathcal{A}}
\newcommand{\edgeset}[1]{E(#1)}
\newcommand{\In}[1]{\mathit{In}(#1)}
\newcommand{\Out}[1]{\mathit{Out}(#1)}
\newcommand{\GraphReach}[1]{\mathsf{GraphReach}(#1)}
\renewcommand{\O}{\widetilde{O}}
\newcommand{\bits}{\mathit{bits}}
\newcommand{\attr}[2]{\mathsf{attr_{#1}(#2)}}
\newcommand{\MEC}{\mathsf{MEC}}
\newcommand{\True}{\mathbf{True}}
\newcommand{\False}{\mathbf{False}}
\newcommand{\ASW}[1]{\langle\!\langle\text{1}\rangle\!\rangle_{a.s.}(#1)}
\newcommand{\Reach}[1]{\mathit{Reach}(#1)}
\newcommand{\delete}[1]{\mathsf{delete}(#1)}
\newcommand{\deleteannounce}[1]{\mathsf{deleteAnnounce}(#1)}
\newcommand{\query}[1]{\mathsf{query}(#1)}
\newcommand{\rep}[1]{\mathsf{rep}(#1)}
\newcommand{\deleteannouncenooutgoing}[1]{\mathsf{deleteAnnounceNoOutgoing}(#1)}
\newcommand{\decrSCC}{\mathit{T_d}}
\newcommand{\condense}[1]{\mathsf{CONDENSE}(#1)}
\newcommand{\set}[1]{\{#1\}}
\begin{document}

\maketitle
\begin{abstract}
	The fundamental model-checking problem, given as input a model and a specification, 
	asks for the algorithmic verification of whether the model satisfies the specification.
	Two classical models for reactive systems are graphs and Markov decision processes (MDPs).
	A basic specification formalism in the verification of reactive systems is the 
	strong fairness (aka Streett) objective, where given different types of 
	requests and corresponding grants, the requirement is that for each type, 
	if the request event happens infinitely often, then the corresponding grant 
	event must also happen infinitely often. 
	All $\omega$-regular objectives can be expressed as Streett objectives
	and hence they are canonical in verification. 
	Consider graphs/MDPs with $n$ vertices, $m$ edges, and a Streett objectives 
	with $k$ pairs, and let $b$ denote the size of the description of the Streett 
	objective for the sets of requests and grants.
	The current best-known algorithm for the problem requires time 
	$O(\min(n^2, m \sqrt{m \log n}) + b \log n)$. 
	In this work we present randomized near-linear time algorithms, with 
	expected running time $\widetilde{O}(m + b)$, where the $\widetilde{O}$ notation
	hides poly-log factors. 
	Our randomized algorithms are near-linear in the size of the input, and hence
	optimal up to poly-log factors.
\end{abstract}

\section{Introduction}\label{sec:intro}
In this work we present near-linear (hence near-optimal) randomized algorithms 
for the strong fairness verification in graphs and Markov decision processes (MDPs).
In the fundamental model-checking problem, the input is a {\em model} and a 
{\em specification}, and the algorithmic verification problem is to check whether
the model {\em satisfies} the specification. 
We first describe the models and the specifications we consider, then the notion 
of satisfaction, and then previous results followed by our contributions.

\para{Models: Graphs and MDPs.}
Graphs and Markov decision processes (MDPs) are two classical models of reactive 
systems.
The states of a reactive system are represented by the vertices of a graph, the transitions
of the system are represented by the edges and non-terminating trajectories of the system
are represented as infinite paths of the graph.
Graphs are a classical model for reactive systems with nondeterminism, and 
MDPs extend graphs with probabilistic transitions that represent reactive systems with 
both nondeterminism and uncertainty. 
Thus graphs and MDPs are the standard models of reactive systems with nondeterminism,
and nondeterminism with stochastic aspects, respectively~\cite{ClarkeBook,baierbook}.
Moreover MDPs are used as models for concurrent finite-state processes~\cite{CourcoubetisY95,Vardi85}
as well as probabilistic systems in open environments~\cite{Segala95,PRISM,STORM,baierbook}.

\para{Specification: Strong fairness (aka Streett) objectives.}
A basic and fundamental specification formalism in the analysis of reactive systems 
is the {\em strong fairness condition}.
The strong fairness conditions (aka Streett objectives) consist of $k$ types 
of requests and corresponding grants, and the requirement is that for each type if the 
request happens infinitely often, then the corresponding grant must also happen
infinitely often. 
Beyond safety, reachability, and liveness objectives, the most standard properties
that arise in the analysis of reactive systems are Streett objectives,  
and chapters of standard textbooks in verification are devoted to it (e.g., 
\cite[Chapter~3.3]{ClarkeBook},~\cite[Chapter~3]{MPProgress},~\cite[Chapters~8,~10]{AH04}).
In addition, $\omega$-regular objectives can be specified as Streett objectives, e.g., 
LTL formulas and non-deterministic $\omega$-automata can be translated to
deterministic Streett automata~\cite{Safra88} and efficient translations
have been an active research area~\cite{ChatterjeeGK13,EsparzaK14,KomarkovaK14}. 
Consequently, Streett objectives are a canonical class of objectives that arise in verification.

\para{Satisfaction.} 
The notions of satisfaction for graphs and MDPs are as follows:
For graphs, the notion of satisfaction requires that there is a trajectory (infinite path) 
that belongs to the set of paths specified by the Streett objective.
For MDPs the satisfaction requires that there is a strategy to resolve the nondeterminism 
such that the Streett objective is ensured almost-surely (with probability~1).
Thus the algorithmic model-checking problem of graphs and MDPs with Streett objectives 
is a central problem in verification, and is 
at the heart of many state-of-the-art tools such as SPIN, NuSMV for graphs~\cite{SPIN,NUSMV}, 
PRISM, LiQuor, Storm for MDPs~\cite{PRISM,LIQUOR,STORM}.

Our contributions are related to the algorithmic complexity of graphs and MDPs with 
Streett objectives. 
We first present previous results and then our contributions.

\subparagraph*{Previous results.}
The most basic algorithm for the problem for graphs is based on repeated SCC (strongly
connected component) computation, and informally can be described as follows:
for a given SCC, (a)~if for every request type that is present in the SCC 
the corresponding grant type is also present in the SCC, then the SCC is identified 
as ``good'', (b)~else vertices of each request type that have no corresponding
grant type in the SCC are removed, and the algorithm recursively proceeds
on the remaining graph.
Finally, reachability to good SCCs is computed. 
The algorithm for MDPs is similar where the SCC computation is replaced with 
maximal end-component (MEC) computation, and reachability to good SCCs is replaced
with probability~1 reachability to good MECs. 
The basic algorithms for graphs and MDPs with Streett objective have been improved in several
works, such as for graphs in~\cite{HT96,ChatterjeeHL15}, for MEC computation 
in~\cite{CH11,ChatterjeeH12,CH2014}, and MDPs with Streett objectives in~\cite{CDHL16}.
For graphs/MDPs with $n$ vertices, $m$ edges, and $k$ request-grant pairs with $b$ denoting 
the size to describe the request grant pairs, the current best-known bound is $O(\min(n^2, m \sqrt{m
\log n}) + b \log n)$. 

\subparagraph*{Our contributions.} 
In this work, our main contributions are randomized near-linear time (i.e.\ linear times a polylogarithmic factor) algorithms 
for graphs and MDPs with Streett objectives. In detail, our contributions are as follows:

\begin{itemize}

\item First, we present a near-linear time randomized algorithm for graphs with Streett objectives
	where the expected running time is $\widetilde{O}(m + b)$, where the $\widetilde{O}$ notation
	hides poly-log factors. 
	Our algorithm is based on a recent randomized algorithm for maintaining the SCC decomposition of graphs 
	under edge deletions, where the expected total running time is near linear~\cite{BPW19}.

\item Second, by exploiting the results of~\cite{BPW19} we present a randomized near-linear time
	algorithm for computing the MEC decomposition of an MDP where the expected running time is $\widetilde{O}(m)$.
	We extend the results of~\cite{BPW19} from graphs to MDPs and present a randomized algorithm to maintain 
	the MEC decomposition of an MDP under edge deletions, where the expected total running time is near
	linear~\cite{BPW19}.

\item Finally, we use the result of the above item to present a near-linear time randomized algorithm 
	for MDPs with Streett objectives where the expected running time is $\widetilde{O}(m + b)$.

\end{itemize}
All our algorithms are randomized and since they are near-linear in the size of the input, they are optimal 
up to poly-log factors. 
An important open question is whether there are deterministic algorithms that can improve the 
existing running time bound for graphs and MDPs with Streett
objectives. Our algorithms are deterministic except for the invocation of the decremental SCC
algorithm presented in~\cite{BPW19}.

\begin{table}[h!]
		\caption{Summary of Results.}\label{tab:results}
	\centering
		\renewcommand{\arraystretch}{1.10}
	\begin{tabular}{|l|l|l|}
		\hline
		
		Problem & New Running Time & Old Running Time \\ \hline
		Streett Objectives on Graphs & $\O(m + b)$ & $\O(\min(n^2, m \sqrt{m}) +
		b)~\cite{CHL17,HT96}$ \\
		Almost-Sure Reachability & $\O(m)$ & $O(m \cdot n^{2/3})$~\cite{CDHL16,CH2014}  \\
		MEC Decomposition & $\O(m)$ & $O(m\cdot n^{2/3})$~\cite{CH2014}  \\
		Decremental MEC Decomposition & $\O(m)$ & $O(nm)$~\cite{CH2014}\\
		Streett Objectives on MDPs   & $\O(m+b)$ & $\O(\min(n^2, m \sqrt{m}) +
		b)~\cite{CDHL16}$\\
	
		\hline
	\end{tabular}
\end{table}
\section{Preliminaries}

A {\em Markov decision process (MDP)} $P\! =\! ((V,E), \ls V_1, V_R \rs,
\delta)$ 
consists of a finite set of vertices $V$ partitioned into 
the player-1 vertices $V_1$ and the random vertices $V_R$, 
a finite set of edges $E \subseteq (V \times V)$, 
and a probabilistic transition function $\delta$.  
The probabilistic transition function maps every random
vertex in $V_R$ to an element of $\D(V)$, where $\D(V)$ is the set of
probability distributions over the set of vertices $V$. A random vertex $v$ has
an edge to a vertex $w \in V$, i.e.\ $(v,w) \in E$ iff
$\delta(v)[w] > 0$. An edge $e = (u,v)$ is a \emph{random edge} if $u \in V_R$ otherwise it is a \emph{player-1	edge}.
W.l.o.g. we assume $\delta(v)$ to be the uniform distribution over vertices $u$ with $(v,u) \in E$.

\emph{Graphs} are a special case of MDPs with $V_R = \emptyset$.
The set $\In{v}$ ($\Out{v}$) describes the set
of predecessors (successors) of a vertex $v$.
More formally, $\In{v}$ is defined as the set $\{w \in V \mid (w,v) \in E \}$
and $\Out{v} = \{ w \in V \mid (v,w) \in E \}$.
When $U$ is a set of vertices, we define $\edgeset{U}$
to be the set of all edges incident to the vertices in $U$. More
formally, $\edgeset{U} = \{(u,v) \in E \mid u \in U	\lor v \in U\}$. 
With $G[S]$ we denote the subgraph of a graph $G = (V,E)$ induced by the set of vertices $S
\subseteq V$. Let $\GraphReach{S}$ be the set of vertices in $G$ that can reach
a vertex of $S \subseteq V$. The set $\GraphReach{S}$ can be found in linear
time using depth-first search~\cite{tarjan1972depth}. When a vertex $u$ can
reach another vertex $v$ and vice versa, we say that $u$ and $v$ are \emph{strongly
connected}.

A {\em play} is an infinite sequence $\omega = \ls v_0, v_1, v_2, \dots \rs$ of
vertices such that each $(v_{i-1},v_i) \in E$ for all $i \geq 1$. 
The set of all plays is denoted with $\Omega$.
A play is initialized by placing a token on an initial vertex. 
If the token is on a vertex owned by player-1, he moves the token along one of the
outgoing edges, whereas if the token is at a random vertex 
$v \in V_R$, the next vertex is chosen according to the probability 
distribution $\delta(v)$.
The infinite sequence of vertices (infinite walk) formed in this way is a play.

{\em Strategies} are recipes for player~1 to extend finite prefixes of plays.
Formally, a player-1 \emph{strategy} is a function $\sigma: V^* \cdot V_1 \mapsto V$ 
which maps every finite prefix $\omega \in V^* \cdot V_1$ of a play that ends in a 
player-1 vertex $v$ to a successor vertex $\sigma(\omega) \in V$, i.e., 
$(v, \sigma(\omega)) \in E$.
A player-1 strategy is \emph{memoryless} if $\sigma_1(\omega) = \sigma_1(\omega')$ 
for all $\omega, \omega' \in V^* \cdot V_1$ that end in the same vertex $v \in V_i$, i.e., 
the strategy does not depend on the entire prefix, but only on the last vertex. 
We write $\Sigma$ for the set of all strategies for player~1.

The \emph{outcome of strategies} is defined as follows:
In graphs, given a starting vertex, a strategy for player~1 induces a unique
play in the graph. 
In MDPs, given a starting vertex $v$ and a strategy $\sigma \in \Sigma$, 
the outcome of the game is a random walk $w^\sigma_v$ for which the
probability of every event is uniquely defined, where an
\emph{event} $\A \subseteq \Omega$ is a  measurable set of plays~\cite{Vardi85}. For a vertex
$v$, strategy $\sigma \in \Sigma$ and an event $\A \subseteq \Omega$, we
denote by $\Pr^\sigma_v(\A)$ the probability that a play belongs to
$\A$ if the game starts at $v$ and player~1 follows $\sigma$.

An \emph{objective} $\phi \subseteq \Omega$ for player~1 is an event, i.e.,
objectives describe the set of winning plays.
A play $\omega \in \Omega$ \emph{satisfies} the objective if $\omega \in
\phi$. In MDPs, a player-1 strategy $\sigma \in
\Sigma$ is almost-sure (a.s.) winning from a starting vertex $v \in V$ for an objective $\phi$ iff
$\Pr_v^\sigma(\phi) = 1$. The \emph{winning set} $\ASW{\phi}$
for player~1 is the set of vertices from which player~1 has an almost-sure
winning strategy. We consider Reachability objectives and $k$-pair Streett objectives.

Given a set $T \subseteq V$ of vertices, the
\emph{reachability objective} $\Reach{T}$ requires that
some vertex in $T$ be visited. Formally, the sets of winning plays are
$\Reach{T} = \{ \ls v_0, v_1, v_2, \dots \rs \in \Omega \mid \exists k \geq 0 \text{ s.t. } v_k	\in T \}$. 
We say $v$ can reach $u$ almost-surely (a.s.) if $v \in \ASW{\Reach{\{u\}}}$.

The \emph{$k$-pair Streett objective} consists of $k$-Streett
pairs\\ $(L_1, U_1), (L_2 U_2), \dots, (L_k,U_k)$ where all $L_i, U_i \subseteq V$
for $1 \leq i \leq k$. An infinite path satisfies the objective iff for all $1
\leq i \leq k$ some vertex
of $L_i$ is visited infinitely often, then some vertex of $U_i$ is visited
infinitely often.

Given an MDP $P = (V,E, \langle V_1, V_R \rangle,\delta)$, an {\em end-component} is a set of vertices $X \subseteq V$
s.t. (1)~the subgraph induced by $X$ is strongly connected (i.e., $(X,E \cap X \times
X)$ is strongly connected) and (2)~all random vertices have their outgoing edges in $X$. More
formally, for all $v \in X \cap V_R$ and all $(v,u) \in E$ we have $u \in X$.
In a graph, if (1) holds for a set of vertices $X \subseteq V$
we call the set $X$ strongly connected subgraph (SCS). 
An end-component, SCS respectively, is \emph{trivial} if it only
contains a single vertex with no edges. All other end-components, SCSs respectively, are \emph{non-trivial}.
A \emph{maximal end-component} (MEC) is an end-component
which is maximal under set inclusion. The importance of MECs is as follows:
(i)~it generalizes \emph{strongly connected components} (SCCs) in graphs (with
$V_R = \emptyset$) and closed recurrent sets of Markov chains (with
$V_1=\emptyset$); and (ii)~in a MEC $X$, player-1 can almost-surely reach all vertices $u
\in X$ from every vertex $v \in X$. 
The MEC-decomposition of an MDP is the partition of the vertex set into MECs and the set of vertices which do
not belong to any MEC\@. 
The condensation of a graph $G$ denoted by $\condense{G}$ is the graph where all
vertices in the same SCC in $G$ are contracted. The vertices of
$\condense{G}$ are called \emph{nodes} to distinguish them from the vertices in $G$.

Let $C$ be a strongly connected component (SCC) of $G = (V,E)$.
The SCC $C$ is a \emph{bottom SCC} if no vertex $v \in C$ has an edge to a
vertex in $V\setminus C$, i.e., no outgoing edges.
Consider an MDP $P = (V,E,\langle V_1, V_R \rangle, \delta)$ and notice 
that every bottom SCC $C$ in the graph $G = (V,E)$ is a MEC because no vertex (and thus no random
vertex) has an
outgoing edge. 

A \emph{decremental graph algorithm} allows the deletion of player-1 edges while maintaining the
solution to a graph problem. It usually allows three kinds of
operations: (1) \emph{preprocessing}, which is computed when the initial
input is first received,
(2)~\emph{delete}, which deletes a player-1 edge and updates the data structure, and
(3)~\emph{query}, which computes the answer to the problem. The \emph{query time} is the time
needed to compute the answer to the query.
The \emph{update time} of a decremental algorithm 
is the cost for a \emph{delete} operation.
We sometimes refer to the delete operations as \emph{update operation}.
The running time of a decremental algorithm is characterized by the \emph{total update time}, i.e., 
the sum of the update times over the entire sequence of	deletions. 
Sometimes a decremental algorithm is randomized and assumes an \emph{oblivious adversary} who
fixes the sequence of updates in advance. When we use a decremental algorithm which assumes
such an oblivious adversary as a subprocedure the sequence of deleted edges must not depend on the 
random choices of the decremental algorithm.

\section{Decremental SCCs}\label{subsec:decsccs}

We first recall the result about decremental strongly
connected components maintenance in~\cite{BPW19} (cf.\ Theorem~\ref{thm:decscc} below) and then augment the result for our
purposes. 

\begin{theorem}[Theorem 1.1 in \cite{BPW19}]\label{thm:decscc}
	Given a graph $G=(V,E)$ with $m$ edges and $n$ vertices, we can maintain a
	data structure $\A$ that supports the operations
	\begin{itemize}
	\item $\delete{u,v}$: Deletes the edge $(u,v)$ from the graph
		$G$.
	\item $\query{u,v}$: Returns whether $u$ and $v$ are in the same
		SCC in $G$,
	\end{itemize}
	in total expected update time $O(m \log^4 n)$ and with worst-case
	constant query time. The bound holds against an oblivious 
	adversary.
\end{theorem}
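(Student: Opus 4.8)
Since this restates Theorem~1.1 of~\cite{BPW19}, a proof means reconstructing that argument; here is the route I would take. The plan is to maintain the SCC decomposition in two layers. The lower layer is a \emph{decremental single-source reachability} (SSR) structure: for a fixed source $s$ it maintains, under edge deletions, the set of vertices reachable from $s$ in total expected time $O(m\,\mathrm{polylog}\,n)$. The upper layer bootstraps SCC maintenance from SSR by a randomized \emph{pivoting recursion}. For each current candidate SCC $U$ we pick a pivot $v\in U$ uniformly at random and run, inside $G[U]$, one forward SSR structure and one backward SSR structure rooted at $v$; then the SCC of $v$ is the set of vertices reachable from $v$ intersected with the set of vertices from which $v$ is reachable (both within $G[U]$), and every other SCC of $G[U]$ lies entirely in exactly one of the three remaining parts --- the vertices that reach $v$ but are not reached from $v$, those reached from $v$ but not reaching $v$, and those incomparable with $v$ --- so we recurse on the three induced subgraphs. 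A deletion is propagated down this recursion tree; whenever a vertex leaves one of $v$'s two reach-sets it migrates to the child subproblem selected by the other reach-set, which may destroy and rebuild SSR structures there. Each vertex stores the identifier of the leaf subproblem it currently sits in, so $\query{u,v}$ is an $O(1)$ comparison and has worst-case constant time.

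The first and principal obstacle is the SSR structure itself, because the textbook Even--Shiloach / Italiano decremental-reachability tree only achieves $O(mn)$ total update time. Here I would adopt the construction of~\cite{BPW19}: organize reachability into $O(\log n)$ levels so that at level $i$ each piece of a vertex partition maintains only a bounded-depth reachability tree, and the pieces of level $i$ are contracted into supernodes to form the graph handled at level $i+1$. The decisive point is to keep the pieces balanced, so that a piece torn apart by a deletion can be rebuilt at amortized cost $O(\mathrm{polylog}\,n)$ per affected edge and the total size of the contracted graphs summed over all levels stays $O(m\,\mathrm{polylog}\,n)$; this is achieved by randomizing the choice of pieces and invoking the oblivious-adversary hypothesis to bound, in expectation, the number of times any vertex migrates between pieces over the whole deletion sequence. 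Proving this balance is the technically delicate step --- it is exactly where randomness and obliviousness are used, and it is where the $\log^4 n$ factor originates --- and it is the part I would expect to dominate the proof.

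The second obstacle is bounding the cost of the upper-layer recursion, since a random pivot's SCC can be a single vertex, so the one-sided reach-parts need not shrink geometrically. The key observation is the random-binary-search-tree phenomenon: ordering the SCCs of $G[U]$ topologically, a uniformly random pivot is (up to weighting by SCC size) a random element of that order, so the larger one-sided reach-part has expected size at most $\frac{3}{4}|U|$, and a standard quicksort-style argument then shows the recursion has depth $O(\log n)$ with high probability. At any fixed depth the subproblems have pairwise disjoint vertex sets, hence pairwise disjoint edge sets, so summing over all $O(\log n)$ depths the total number of edges over all subproblems ever created is $O(m\log n)$. Since an SSR structure on a subgraph with $m'$ edges has total update time $O(m'\,\mathrm{polylog}\,n)$ --- already including its own initialisation, and hence absorbing the re-initialisations triggered by splits --- the overall expected update time is $O(m\log^4 n)$. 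The leaf-level identifier array gives the constant query time, and all probabilistic bounds hold against an oblivious adversary because the random pivots, and the randomness inside the SSR structure, are never revealed to whoever fixes the deletion sequence.
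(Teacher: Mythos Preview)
The paper does not prove this theorem at all: it is stated purely as a citation of Theorem~1.1 in~\cite{BPW19} and then used as a black box (the only further work in Section~\ref{subsec:decsccs} is to augment the interface with the three extra operations of Corollary~\ref{cor:returnnewsccs}). So there is nothing to compare against on the paper's side --- your write-up goes well beyond what the present paper attempts.

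As a reconstruction of the argument in~\cite{BPW19}, your two-layer outline is broadly on target: the main contribution of that work is indeed a near-linear decremental single-source reachability structure built from a randomized hierarchy of bounded-depth ES-trees over contracted graphs, and the SCC maintenance is then obtained by the random-pivot/forward-backward-reachability recursion (the reduction that goes back to Roditty--Zwick and was refined in later work). Two caveats worth flagging if you intend this as more than a sketch. First, your expected-shrinkage claim (``the larger one-sided reach-part has expected size at most $\tfrac{3}{4}|U|$'') is not literally how the recursion is analysed: a uniformly random \emph{vertex} may sit in a single huge SCC, so one argues instead about the expected depth of the recursion via a weighted quicksort-type bound on the topological order of SCCs, not about per-step geometric shrinkage. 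Second, the accounting for rebuilds when a vertex migrates between subproblems is the genuinely delicate part of the SSR analysis, and your paragraph gestures at it but does not pin down the potential or charging argument; that is exactly where one would need to open~\cite{BPW19} rather than improvise. For the purposes of the present paper, though, none of this matters --- the theorem is invoked, not proved.
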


The preprocessing time of the algorithm is $O(m + n)$ using~\cite{tarjan1972depth}.
To use this algorithm we extend the query and update operations 
with three new operations described in Corollary~\ref{cor:returnnewsccs}.

\begin{corollary}\label{cor:returnnewsccs}
	Given a graph $G = (V,E)$ with $m$ edges and $n$ vertices, we can maintain a
	data structure $\A$ that supports the operations
	\begin{itemize}
	\item $\rep{u}$ (query-operation): 
		Returns a reference to the SCC containing the vertex $u$.
	\item $\deleteannounce{E}$ (update-operation): 
		Deletes the set $E$ of edges from the graph $G$. 
		If the edge deletion creates new SCCs $C_1, \dots, C_k$ the
		operation returns a list $Q = \{ C_1, \dots, C_k \}$ 
		of references to the new SCCs.
	\item $\deleteannouncenooutgoing{E}$ (update-operation): 
		Deletes the set $E$ of edges from the graph $G$. 
		The operation returns a list $Q = \{C_1, \dots, C_k\}$ of
		references to all new SCCs with no outgoing edges.
	\end{itemize}
	in total expected update time $O(m \log^4 n)$ and worst-case
	constant query time for the first operation. 
	The bound holds against an oblivious adaptive adversary.
\end{corollary}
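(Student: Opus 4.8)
The plan is to wrap the decremental SCC data structure $\A$ of Theorem~\ref{thm:decscc} with bookkeeping that tracks, for every current SCC, two counters: the number of internal edges and the number of outgoing edges. Concretely, I would maintain a union--find style structure over vertices whose representatives are the current SCCs; $\rep{u}$ then just follows the pointer from $u$ to its representative, giving worst-case constant query time. For each representative we store $\mathit{out}(C)$, the number of edges leaving $C$. Since the underlying algorithm of \cite{BPW19} already internally maintains an SCC partition that only ever refines under deletions, the key observation is that an edge $(u,v)$ is outgoing from its tail's SCC exactly when $\rep{u}\neq\rep{v}$, and this status can only change when an SCC splits (it never merges), so each edge changes status $O(\log n)$ times over the whole sequence — or, more crudely, we can afford to recount.

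The implementation of $\deleteannounce{E}$ proceeds as follows. First I delete the edges of $E$ one by one by calling the underlying $\delete{\cdot,\cdot}$, and I ask the data structure of \cite{BPW19} which SCCs were split; the new-SCC notification is available because that algorithm maintains an explicit SCC hierarchy (a ``top tree''/layered structure) and already knows which vertices change component. For each old SCC $C$ that splits into $C_1,\dots,C_j$, I iterate over the edges incident to the smaller side(s) — using the standard ``small-to-large'' charging so that every vertex is touched $O(\log n)$ times total — to recompute the internal/outgoing counters of the new pieces, create fresh representatives, and redirect the vertex-to-representative pointers. The list $Q$ returned is simply the collection of these newly created representatives. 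For $\deleteannouncenooutgoing{E}$ I do exactly the same but filter $Q$ to those $C_i$ with $\mathit{out}(C_i)=0$; the counters make this an $O(1)$ test per candidate. Because the only vertices whose representative pointer changes are those on the smaller side of a split, the extra work is subsumed by the small-to-large bound, and the total additional cost is $O(m\log^2 n)$, which is dominated by the $O(m\log^4 n)$ of Theorem~\ref{thm:decscc}; the query time for $\rep{u}$ stays worst-case constant since we keep direct vertex-to-representative pointers rather than a lazy find.

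The main obstacle is extracting the ``which SCCs were newly created, and which vertices moved'' information from the black box of \cite{BPW19} within the stated time budget: Theorem~\ref{thm:decscc} as quoted only exposes $\delete{}$ and $\query{}$. I would address this by appealing to the internal structure of their algorithm — it maintains a decremental SCC partition via a hierarchy of maintained graphs, and the split events (together with the identity of the smaller side, which is what small-to-large needs) are generated internally whenever a component breaks apart; exposing them costs only the time already spent detecting the split. If one prefers a fully black-box argument, an alternative is to periodically probe with $\query{}$ guided by the doubling structure, but that is messier; I expect the clean route to be the structural one, and the write-up should state explicitly that we use the component-split notifications of the \cite{BPW19} construction. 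A secondary, minor subtlety is the adversary model: the wording changes from ``oblivious adversary'' in Theorem~\ref{thm:decscc} to ``oblivious adaptive adversary'' in the corollary, and I would make sure the reductions that call these new operations fix their deletion sequence independently of the random bits, so that the oblivious-adversary guarantee of \cite{BPW19} still applies unchanged.
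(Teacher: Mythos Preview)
Your proposal is correct and follows essentially the same approach as the paper: the first two operations are obtained directly from the internal structure of the \cite{BPW19} data structure, and the third is handled by the small-to-large trick on SCC splits, charging each edge $O(\log n)$ times because a vertex can be on the smaller side of a split only logarithmically often. The only cosmetic difference is that you maintain an explicit outgoing-edge counter per SCC, whereas the paper simply rescans the edges incident to the small pieces at each split (noting that any new outgoing edge of the large piece is also incident to a small piece); both yield the same bound.
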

The first function is available in the algorithm described in~\cite{BPW19}.
The second function can be implemented directly from the construction of the data structure
maintained in~\cite{BPW19}.
The key idea for the third function is that when an SCC splits, we
consider the new SCCs. We distinguish between the largest of them and the others which we call small
SCCs. We then consider all edges incident to the small SCCs:
Note that as the new outgoing edges in the large SCC are also incident to a small 
SCC we can also determine the outgoing edges of the large SCC.
Observe that whenever an SCC splits all the small SCCs are at most half the size of the original SCC. 
That is, each vertex can appear only $O(\log n)$ times in small SCCs during the whole algorithm.
As an edge is only considered if one of the incident vertices is in a small SCC each 
edge is considered $O(\log n)$ times and the additional running time is bounded by $O(m \log n)$.
Furthermore, we define $\decrSCC$ as the running time of
the best decremental SCC algorithm which supports the operations in 
Corollary~\ref{cor:returnnewsccs}. Currently, $\decrSCC = O(m \log^4 n)$.

\section{Graphs with Streett Objectives}\label{sec:streettgraph}
In this section, we present an algorithm which computes the winning regions for graphs with Streett objectives.
The input is a directed graph $G=(V,E)$ and $k$ Streett pairs $(L_j,U_j)$ for $j
= 1,\dots, k$. The size of the input is measured in terms of $m = |E|$, $n =
|V|$, $k$ and $b = \sum_{j=1}^k(|L_j| + |U_j|) \leq 2nk$. 

\subparagraph*{Algorithm $\Streett$ and good component detection.}
Let $C$ be an SCC of $G$. In the good component detection problem, we compute (a) a
non-trivial SCS $G[X] \subseteq C$ induced by the set of vertices $X$, such that
for all $1 \leq j \leq k$ either $L_j \cap X = \emptyset$ or $U_j \cap X
\neq \emptyset$ or (b) that no such SCS exists. In the first case, there exists
an infinite path that eventually stays in $X$ and satisfies the Streett
objective, while in the latter case, there exists no path which satisfies the
Streett objective in $C$. From the results of~\cite[Chapter 9, Proposition
9.4]{AH04} the following algorithm,
called Algorithm $\Streett$, suffices for the winning set computation:
\begin{enumerate}
	\item Compute the SCC decomposition of the graph;
	\item For each SCC $C$ for which the good component detection returns an
		SCS, label the SCC $C$ as satisfying.
	\item Output the set of vertices that can reach a satisfying SCC as the
		winning set.
\end{enumerate}
Since the first and last step are computable in linear time, the running time of
Algorithm $\Streett$ is dominated by the detection of good components in SCCs.
In the following, we assume that the input graph is strongly connected and focus
on the good component detection.

\para{Bad vertices.}
A vertex is $\emph{bad}$ if there is some $1\leq j \leq k$ such that the vertex is in
$L_j$ but it is not strongly connected to any vertex of $U_j$. All other
vertices are good. Note that a good vertex might become bad if a vertex
deletion disconnects an SCS or a vertex of a set $U_j$. A good component is a
non-trivial SCS that contains only good vertices.

\para{Decremental strongly connected components.} 
Throughout the algorithm, we use the algorithm described in
Section~\ref{subsec:decsccs} to maintain the SCCs of a graph when deleting edges. In particular, we use
Corollary~\ref{cor:returnnewsccs} to obtain a list of the new SCCs which are
created by removing bad vertices. Note that we can `remove' a vertex
by deleting all its incident edges. Because the decremental SCC algorithm assumes an oblivious
adversary we sort the list of the new SCCs as otherwise the edge deletions performed by our algorithm would depend on the random choices of the decremental SCC algorithm.

\para{Data structure.}
During the course of the algorithm, we maintain a decomposition of the vertices
in $G = (V,E)$: We maintain a list $Q$ of certain sets $S \subseteq V$ such that every SCC of $G$ is contained in some $S$
stored in $Q$.
The list $Q$ provides two operations: $Q.\add{X}$ enqueues $X$ to $Q$; and $Q.\pull$ dequeues an arbitrary element $X$ from $Q$.
For each set $S$ in the decomposition, we store a data structure $D(S)$ in the list $Q$.
This data structure $D(S)$ supports the following operations 
\begin{enumerate}
	\item $\Construct{S}$: initializes the data structure for the set $S$
	\item $\Remove{S,B}$ updates $S$ to $S \setminus B$ for a set $B \subseteq V$ and returns
		$D(S)$ for the new set $S$.
	\item $\Bad{S}$ returns a reference to the set $\{ v \in S \mid \exists j \text{ with } v
		\in L_j \text{ and } U_j \cap S = \emptyset \}$
\item $\SCCs{S}$ returns the set of SCCs currently in $G[S]$. We implement $\SCCs{S}$ as a balanced binary search tree which allows logarithmic and updates and deletions.
\end{enumerate}
In~\cite{HT96} an implementation of this data structure with functions (1)-(3) is described that achieves
the following running times. For a set of vertices $S \subseteq V$, let
$\bits(S)$ be defined as $\sum_{j=1}^k (|S \cap L_j| + |S \cap U_j|)$.

\begin{lemma}[Lemma~2.1 in~\cite{HT96}]\label{lem:Streett_ds}
	After a one-time preprocessing of time $O(k)$, the data structure $D(S)$ can
	be implemented in time $O(bits(S) + |S|)$ for $\Construct{S}$, time
	$O(\bits(B) + |B|)$ for $\Remove{S,B}$ and constant running time
	for $\Bad{S}$.
\end{lemma}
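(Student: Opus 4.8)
The plan is to realise $D(S)$ with a handful of counters and doubly-linked lists, all cross-linked by back-pointers so that the pair or vertex affected by an update can be located and spliced out in $O(1)$ time, and to keep the set $\Bad{S}$ only \emph{lazily}, so that a single $\Remove$ never has to pay for the vertices that a counter reaching zero turns bad.

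Concretely, for the current set $S$ I would store: (i)~for every Streett pair $j$ a counter $c_j$ maintained equal to $|U_j\cap S|$; (ii)~for every $j$ a doubly-linked list $\ell_j$ of the vertices of $L_j\cap S$, together with, for every vertex $v$ and every $j$ with $v\in L_j$, a pointer to the occurrence of $v$ in $\ell_j$; and (iii)~a doubly-linked list $\mathcal{B}$ of the \emph{dead} pairs, i.e.\ those $j$ with $c_j=0$ and $\ell_j\neq\emptyset$, each such $j$ carrying a Boolean flag ``$j\in\mathcal{B}$'' and a pointer to its occurrence in $\mathcal{B}$. A vertex $v\in S$ lies in $\{v\in S\mid\exists j:\ v\in L_j\wedge U_j\cap S=\emptyset\}$ exactly when $v$ appears in $\ell_j$ for some dead $j$, so this set equals $\bigcup_{j\in\mathcal{B}}\ell_j$; hence $\Bad{S}$ simply returns a reference to this implicit set, namely a pointer to $\mathcal{B}$ and the lists $\ell_\bullet$, in $O(1)$ time, and whoever wants to enumerate it walks $\mathcal{B}$ and the attached lists, using a per-vertex ``already reported'' flag to suppress duplicates, at cost $O(\bits(\Bad{S}))$. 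The one-time $O(k)$ preprocessing allocates the arrays indexed by pair number and, using a standard lazy/uninitialised-array technique, guarantees that no later operation has to scan all $k$ pairs; I also assume the Streett pairs are given as (or, inside the first call $\Construct{V}$, converted in time $O(b+n)=O(\bits(V)+|V|)$ into) per-vertex membership lists.

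For $\Construct{S}$ I would scan every $v\in S$ and, for each, its $L$- and $U$-membership lists: a $U$-membership of $v$ in pair $j$ increments $c_j$ and records $j$ as ``touched''; an $L$-membership of $v$ in $j$ appends $v$ to $\ell_j$ (storing the back-pointer) and records $j$ as touched. Finally I scan the list of touched pairs---there are at most $\bits(S)$ of them---and put into $\mathcal{B}$ every touched $j$ with $c_j=0$ and $\ell_j\neq\emptyset$; the total cost is $O(\bits(S)+|S|)$. For $\Remove{S,B}$ I process the vertices of $B$ one at a time; for $v\in B$ I first splice $v$ out of each list $\ell_j$ with $v\in L_j$ via the stored back-pointer, and if $\ell_j$ thereby becomes empty while $j$ is flagged as being in $\mathcal{B}$ I splice $j$ out of $\mathcal{B}$; then, for each $j$ with $v\in U_j$, I decrement $c_j$, and if it reaches $0$ while $\ell_j\neq\emptyset$ and $j$ is not already flagged, I append $j$ to $\mathcal{B}$ and set its flag. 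Each such step is $O(1)$, so one $\Remove{S,B}$ costs $O(\bits(B)+|B|)$, after which $D(S)$ for the new $S$ is returned. Correctness is immediate from the invariants $c_j=|U_j\cap S|$, $\ell_j=L_j\cap S$ and $\mathcal{B}=\{j\mid U_j\cap S=\emptyset,\ L_j\cap S\neq\emptyset\}$, which $\Construct{S}$ establishes and $\Remove{S,B}$ preserves.

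The step I expect to require the most care is staying within the \emph{per-operation} bound $O(\bits(B)+|B|)$ for $\Remove$: when a counter $c_j$ drops to $0$, a potentially large set $L_j\cap S$ becomes bad all at once, so materialising those vertices into an explicit $\Bad{S}$ list inside $\Remove$ would cost $\Omega(|L_j\cap S|)$, which is not bounded by $\bits(B)+|B|$. Representing $\Bad{S}$ lazily (a pointer to $\mathcal{B}$ together with the lists $\ell_\bullet$) is precisely what avoids this: enumerating the bad vertices costs $O(\bits(\Bad{S}))$ and is incurred only when the caller asks, which is no more than what the caller must pay anyway to feed that same set into the next $\Remove$. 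A secondary technical point is attaining $O(\bits(S)+|S|)$ rather than $O(k)$ for $\Construct{S}$, and $O(1)$ for $\Bad{S}$ and for the counter look-ups inside $\Remove$; this is what forces the per-vertex membership lists (so that only the at most $\bits(S)$ touched pairs are ever inspected) together with the back-pointers and a one-time-initialised $k$-indexed scratch array---none of this is conceptually deep, but all of it has to be in place for the stated bounds.
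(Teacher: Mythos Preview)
The paper does not prove this lemma; it is quoted verbatim as Lemma~2.1 of~\cite{HT96} and used as a black box. Your proposal is correct and is essentially the construction given in that reference: counters $c_j=|U_j\cap S|$, doubly-linked lists $\ell_j=L_j\cap S$ with back-pointers for $O(1)$ splicing, a list $\mathcal{B}$ of ``dead'' pairs, and the crucial lazy representation of $\Bad{S}$ as a handle to $\mathcal{B}$ and the $\ell_\bullet$ rather than an explicit vertex list, so that $\Remove$ never pays for the cascade when a counter hits zero. Your discussion of the two delicate points---the per-operation bound for $\Remove$ and avoiding an $O(k)$ term in $\Construct{S}$ via per-vertex membership lists and lazily initialised $k$-indexed arrays---matches what the original implementation needs.
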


We augment the data structure with the function $\SCCs{S}$ which runs in total
time of a decremental SCC algorithm supporting the first function in
Corollary~\ref{cor:returnnewsccs}.

\para{Algorithm Description.}
The key idea is that the algorithm maintains the list $Q$ of data structures $D(S)$ as described
above when deleting bad vertices.
Initially, we enqueue the data structure
returned by $\Construct{V}$ to $Q$. As long as $Q$ is non-empty, the algorithm repeatedly pulls a set $S$ from $Q$ and
identifies and deletes bad vertices from $G[S]$. If no edge is contained in
$G[S]$, the set $S$ is removed as it can only induce trivial SCCs. Otherwise, the
subgraph $G[S]$ is either determined to be strongly connected and output as a
good component or we identify and remove an SCC with at most
half of the vertices in $G[S]$. Consider Figure~\ref{fig:goodcomp} for an illustration of an example
run of Algorithm~\ref{alg:goodcomp}.

\begin{figure}[ht]
	\centering
	\includegraphics[width=\textwidth]{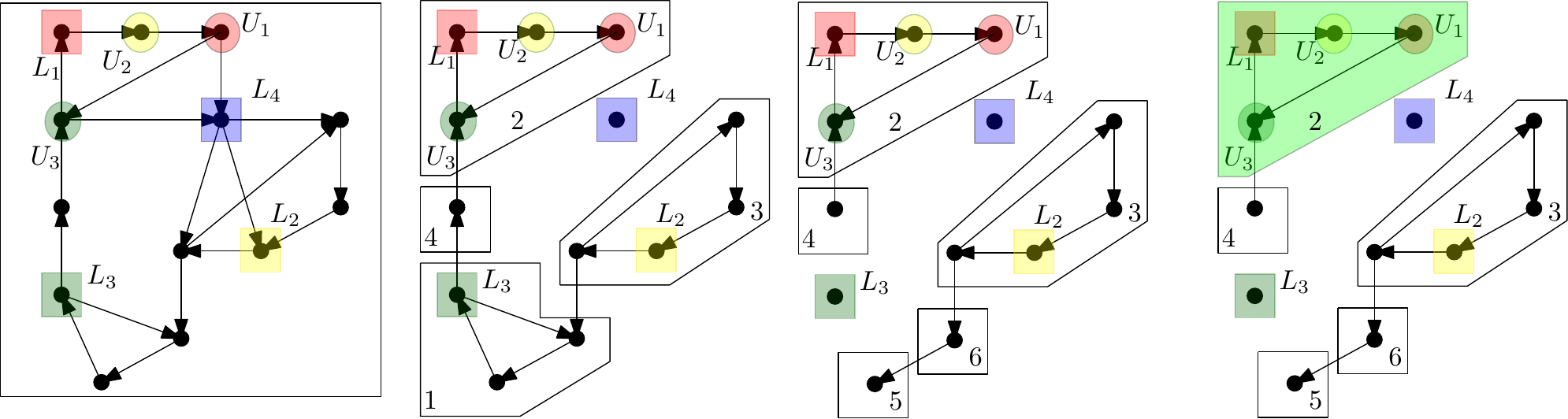}
	\caption{Illustration of one run of Algorithm~\ref{alg:goodcomp}: The vertex in the set $L_4$
		is a bad vertex and we remove it from the SCC yielding four
		new SCCs. First, we look in the SCC containing
		$L_3$. The vertex in $L_3$ is a bad vertex because there is no vertex in
		$U_3$ in this SCC\@.
		Again two SCCs are created after its removal. The next SCC we process is the SCC containing
		$L_1$. It is a good component
		because the vertex in $L_1$ has a vertex in $U_1$ in the same SCC\@. No bad vertices are
		removed and the whole SCC is identified as a good component.
}\label{fig:goodcomp}
\end{figure}

\para{Outline correctness and running time.}
In the following, 
when we talk about the \emph{input graph} $\hat{G}$ we mean the unmodified, strongly connected graph which we
use to initialize Algorithm~\ref{alg:goodcomp}. 
In contrast, with the \emph{current} graph $G$ we refer to the graph where we already deleted vertices and their incident edges in the course of finding a good component.
For the correctness of Algorithm~\ref{alg:goodcomp}, we show that if a good component exists, then
there is a set $S$ stored in list $Q$ which contains all vertices of this good component.

To obtain the running time bound of Algorithm~\ref{alg:goodcomp}, we use the fact that we can maintain the SCC decomposition
under deletions in $O(\decrSCC)$ total time. With the properties of the data
structure described in Lemma~\ref{lem:Streett_ds} we get a
running time of $\O(n+b)$ for the maintenance of the data structure and
identification of bad vertices over the whole algorithm. Combined, these ideas
lead to a total running time of $\O(\decrSCC + n + b)$ which is $\O(m + b)$ using
Corollary~\ref{cor:returnnewsccs}.

\begin{algorithm}[t]
	\small
	\caption{Algorithm~\textsc{GoodComp}}\label{alg:goodcomp}
	\KwIn{Strongly connected graph $G = (V,E)$ and Streett pairs $(L_j,U_j)$ for $j = 1, \dots, k$}
	\KwOut{a good component in $G$ if one exists}
	Invoke an instance $\A$ of the decremental SCC algorithm; Initialize $Q$ as a new list.\;
	$D(V) = \Construct{V}$; $D(V).\SCCs{V} \gets \{\A.\rep{x}\}$ for some $x \in V$\;
	$Q.\add{D(V)}$\; 
	\While{$Q$ is not empty}{\label{alg:goodcomp:outerwhile}
		$D(S) \gets Q.\pull$\;\label{alg:goodcomp:pull}
		\While{$D(S).\Bad{S}$ is not empty}{\label{alg:goodcomp:removebad1}
			$B \gets D(S).\Bad{S}$; $D(S) \gets D(S).\Remove{S,B}$\;
			\tcp{obtain SCCs after deleting bad vertices from $S$}
			$D(S).\SCCs{S} \gets D(S).\SCCs{S} \setminus \left(\bigcup_{b \in
					B} \set{\A.\rep{b}}\right)$\;  \label{alg:goodcomp:removebfromscc}
			$D(S).\SCCs{S} \gets D(S).\SCCs{S} \cup \A.\deleteannounce{\edgeset{B}}$\;\label{alg:goodcomp:dannounce}
		}\label{alg:goodcomp:removebad2}

		\If{$G[S]$ contains at least one edge}{\label{alg:goodcomp:removetrivial}
			Initialize $K$ as a new list\;
			\For{ $X \gets D(S).\SCCs{S}$ }{\label{alg:goodcomp:for1}
				\lIf{$X = S$} {
					\textbf{output} $G[S]$; \tcp*[f]{good component found}\DontPrintSemicolon\label{alg:goodcomp:outputgoodcomp} 
				}
				\lIf{$|X| \leq {|S|\over 2}$}{
					$K.\add{X}$;
				}
			}
			Sort the SCCs in $K$ by vertex id (look at all the vertices in each SCC of
			$K$)\;\label{alg:goodcomp:sort}
			$R \gets \emptyset$\tcp*{{\footnotesize Build $D(X)$ for SCCs $X$ in $K$
			and remove $X$ from $S$,$D(S)$ and $\SCCs{S}$}}
			
			\For{$X \gets K.\pull$}{\label{alg:goodcomp:for2}
				$R \gets R \cup X$; $D(X) = \Construct{X}$\;
				$D(X).\SCCs{X} \gets \set{\A.\rep{x}}$ for some $x \in
				X$\;\label{alg:goodcomp:newSmallSCC}
				$D(S).\SCCs{S} \gets D(S).\SCCs{S} \setminus \{\A.\rep{x}\}$ for some $x \in
				X$\;\label{alg:goodcomp:remSmallSCC}
				$Q.\add{D(X)}$\;\label{alg:goodcomp:ds1}
			}
			\lIf{$D(S).\SCCs{S} \not= \emptyset$}{
				$Q.\add{D(S).\Remove{S,R}}$\label{alg:goodcomp:ds2} 
			}
		}
	}

	\Return No good component exists. 
\end{algorithm}

\begin{lemma}\label{lem:goodcomp:runningtime}
	Algorithm~\ref{alg:goodcomp} runs in expected time  $\O(m + b)$.
\end{lemma}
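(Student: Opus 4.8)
The plan is to bound separately (i) the total cost spent inside the decremental SCC data structure $\A$, (ii) the total cost of maintaining the Streett data structures $D(S)$ and identifying bad vertices, and (iii) the bookkeeping cost of splitting $\SCCs{S}$ and sorting the small SCCs in $K$. For (i), every edge deletion passed to $\A$ via $\deleteannounce{\cdot}$ comes from $\edgeset{B}$ for a bad set $B$ that is then removed from the graph, so across the whole run each edge is deleted at most once; hence the total update time of $\A$ is $O(\decrSCC) = O(m\log^4 n)$ by Corollary~\ref{cor:returnnewsccs}. The $O(m+n)$ preprocessing and the constant-time $\rep{\cdot}$ queries are absorbed into this bound. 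Crucially I must check that the oblivious-adversary requirement is met: the sort in line~\ref{alg:goodcomp:sort} (by vertex id) makes the order in which new sets $X$ are pulled from $K$, and therefore the sequence of future edge deletions, independent of the random choices of $\A$, so the $O(m\log^4 n)$ expected bound applies.

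For (ii), the key structural fact is the halving property: whenever we split off a set $X$ into $K$ we only do so if $|X|\le |S|/2$, so each vertex lies in at most $O(\log n)$ of the sets $S$ ever created, and the sets at each ``level'' are vertex-disjoint. Therefore $\sum_S \bits(S) = O(b\log n)$ and $\sum_S |S| = O(n\log n)$, where the sum ranges over all sets $S$ on which $\Construct{\cdot}$ is ever called. By Lemma~\ref{lem:Streett_ds}, the total time for all $\Construct{\cdot}$ calls is $O\!\big(\sum_S (\bits(S)+|S|)\big) = O(b\log n + n\log n)$. For $\Remove{S,B}$, each call costs $O(\bits(B)+|B|)$; since the removed sets $B$ (bad vertices, or the union $R$ of small SCCs) at a fixed $S$ are disjoint and contained in $S$, and each vertex again participates in $O(\log n)$ distinct $S$'s, the removals telescope to $O(b\log n + n\log n)$ as well. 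All $\Bad{\cdot}$ calls are $O(1)$ each, and the number of such calls is bounded by the number of bad-vertex removal rounds plus one per set, which is $O(n\log n)$ in total. This gives the $\O(n+b)$ claimed for data-structure maintenance.

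For (iii), maintaining $\SCCs{S}$ as a balanced BST costs $O(\log n)$ per insertion/deletion; the total number of SCC-insertions over the run is $O(n)$ (each vertex enters a ``new SCC'' $O(\log n)$ times via the small-SCC argument already used for the third operation of Corollary~\ref{cor:returnnewsccs}), so this contributes $O(n\log^2 n)$. The sort in line~\ref{alg:goodcomp:sort} looks at every vertex of every SCC placed in $K$; since those SCCs are small, a vertex is looked at $O(\log n)$ times total, so all sorts together cost $O(n\log^2 n)$. Summing (i)--(iii) yields $\O(\decrSCC + n + b) = \O(m + b)$, using $\decrSCC = O(m\log^4 n)$ and $n \le m$ (after discarding isolated vertices, or absorbing $n$ into $m+b$ since the input has size $\Theta(m+b)$).

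The main obstacle I expect is making the ``each vertex appears in $O(\log n)$ sets $S$'' accounting fully rigorous despite the two-pronged way sets shrink: a set $S$ pulled from $Q$ first loses its bad vertices (which does \emph{not} halve $S$), and only afterwards is partitioned into small SCCs (each $\le |S|/2$) plus a remainder $D(S).\Remove{S,R}$ that is re-enqueued and may itself be larger than $|S|/2$. One must argue that the potential $\sum_v \lceil \log |S_v| \rceil$ (summed over the current sets, $S_v$ being the set containing $v$) does not increase when bad vertices are deleted and strictly decreases for vertices that move into a small SCC, while the re-enqueued remainder inherits $v$'s old bound — so a vertex can be ``charged'' only $O(\log n)$ times for $\Construct$/$\Remove$ work and only against the \emph{small} SCCs for the $\SCCs{\cdot}$ and sorting work. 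Pinning down this potential argument cleanly is the crux; everything else is the routine summation above.
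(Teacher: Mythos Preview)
Your proposal is correct and follows essentially the same approach as the paper: bound the decremental-SCC work by $O(\decrSCC)$ since each edge is deleted once, and use the halving property $|X|\le |S|/2$ to charge each vertex and each bit $O(\log n)$ times for the $\Construct{\cdot}$/$\Remove{\cdot}$ calls and the sorting of $K$, giving $O(\decrSCC + (n+b)\log n + n\log^2 n)=\widetilde{O}(m+b)$. The ``obstacle'' you flag is handled in the paper exactly as you suggest---work is only charged to the small SCCs $X$, so a vertex sitting in the large re-enqueued remainder is not charged until it eventually lands in a small SCC; the paper additionally notes that the for-loop over $\SCCs{S}$ is run in lockstep so that its cost is proportional to the small SCCs only, which fits the same charging scheme.
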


\begin{proof}
	The preprocessing and initialization of the data structure $D$ and the removal
	of bad vertices in the whole algorithm takes time $O(m + k + b)$ using
	Lemma~\ref{lem:Streett_ds}. Since each vertex is deleted at most once, the data structure 
	can be constructed and maintained in total time $O(m)$. 
	Announcing the new SCCs after deleting the bad vertices at
	Line~\ref{alg:goodcomp:dannounce} is in $O(\decrSCC) = \O(m)$ total time by 
	Corollary~\ref{cor:returnnewsccs}.
	Consider an iteration of the while loop at Line~\ref{alg:goodcomp:outerwhile}:
	A set $S$ is removed from $Q$. Let us denote by $n'$ the number of vertices of $S$. 
	If $G[S]$ does not contain any edge	after the removal of bad vertices, 
	then $S$ is not considered further by the algorithm. 
	Otherwise, the for-loop at Line~\ref{alg:goodcomp:for1} considers all new SCCs.
	Note the we can implement the for-loop in a lockstep fashion:
	In each step for each SCC we access the $i$-th vertex and as soon as all of the vertices 
	of an SCC are accessed we add it to the list $K$.
	When only one SCC is left we compute its size using the original set $S$ and 
	the sizes of the other SCCs. If its size is at most $|S|/2$ we add it to $K$.
	Note that this can be done in time proportional to the number of vertices in the SCCs in $S$
	of size at most $|S|/2$.
	The sorting operation at Line~\ref{alg:goodcomp:sort} takes time $O(|K| \log |K|)$ plus the size of all
	the SCCs in $K$, that is $\sum_{K_i \in K} |K_i|$. Note that $O(|K| \log |K|) = O((\sum_{K_i
		\in K} |K_i|) \log (\sum_{K_i \in K} |K_i|))$.  Let $K_i \in K$ be an SCC stored in $K$. 
	Note that during the algorithm each vertex can appear at most $\log(n)$ times in the list $K$.
	This is by the fact that $K$ only contains SCCs that are at most half the size of the original set $S$.
	We obtain a running time bound
	of $O(n (\log n)^2)$ for Lines~\ref{alg:goodcomp:for1}-\ref{alg:goodcomp:sort}.\\
	Consider the second for-loop at Line~\ref{alg:goodcomp:for2}:
	Let $|X| = n_1$. The operations $\Remove{\cdot}$ and $\Construct{\cdot}$ are called once per
	found SCC $G[X]$ with $X \neq S$ and take by
	Lemma~\ref{lem:Streett_ds} $O(|X| + \bits(X))$ time. Whenever a vertex is in
	$X$, the size of the set in $Q$ containing $v$ originally is reduced by at least a factor of two due to the fact
	that $|X| = n_1 \leq n'/2$. This happens at most $\lceil \log n \rceil$
	times. By charging $O(1)$ to the vertices in $X$ and, respectively,
	to $\bits(X)$, the total running time for
	Lines~\ref{alg:goodcomp:ds1} \& \ref{alg:goodcomp:ds2} can be bounded by $O((n +
	b) \log n)$ as each vertex and bit is only charged $O(\log n)$ times.
	Combining all parts yields the claimed running time bound of $O(\decrSCC + b\log n + n \log^2n) =
	\O(m +b)$. 
\end{proof}

The correctness of the algorithm is similar to the analysis given
in~\cite[Lemmas~3.6 \& 3.7]{CHL17} except that we additionally have to prove that
$\SCCs{S}$ holds the SCCs of $G[S]$. Lemma~\ref{lem:goodcomp:ds} shows that we
maintain $\SCCs{S}$ properly for all the data structures in $Q$.

\begin{lemma}\label{lem:goodcomp:ds}
	After each iteration of the outer while-loop every non-trivial SCC of the current
	graph is contained in one of the subgraphs $G[S]$ for which the data structure $D(S)$ is
	maintained in $Q$ and $\SCCs{S}$ stores a list of all SCCs contained in $S$.
\end{lemma}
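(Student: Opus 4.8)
The plan is to prove, by induction on the number of completed iterations of the outer while-loop at Line~\ref{alg:goodcomp:outerwhile}, a slightly strengthened invariant: at the end of every iteration (i)~the vertex sets $S$ with $D(S)\in Q$ are pairwise disjoint; (ii)~every non-trivial SCC of the current graph $G$ is contained in $G[S]$ for exactly one such $S$; and (iii)~for each such $S$, $\SCCs{S}$ is exactly the set of SCCs of $G[S]$. The lemma is then the conjunction of (ii) and (iii). For the base case, before the first iteration $Q=\{D(V)\}$ and, since $\hat G$ is strongly connected, $V$ is its only SCC, which is what $D(V).\SCCs{V}=\{\A.\rep{x}\}$ stores, so (i)--(iii) hold.

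The crucial auxiliary step is a locality property of the single, globally maintained instance $\A$: if $B\subseteq S$ and we delete $\edgeset{B}$, then no SCC that is not a subset of $S$ is split, and $\A.\deleteannounce{\edgeset{B}}$ returns only SCCs that are subsets of $S$. Indeed, every edge of $\edgeset{B}$ has an endpoint in $B\subseteq S$; if deleting such an edge $(x,y)$ splits an SCC $C$, then $(x,y)$ is an edge of $C$, so $x,y\in C$ and $C$ is non-trivial; by the covering invariant $C$ lies in one of the sets of $Q$ or in the set $S$ currently being processed, and since it contains $x\in S$ and these sets are pairwise disjoint, $C\subseteq S$. The same bookkeeping has to be threaded through the inner while-loop, where $S$ shrinks: one maintains, as a loop invariant, that the sets in $Q$ together with the current $S$ are disjoint and cover every non-trivial SCC of the current graph, which is preserved because the deletions only touch the inside of $S$. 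I expect this locality argument, i.e.\ showing that the shared decremental-SCC structure behaves as if it were run separately on each $G[S]$, to be the main obstacle, and it is precisely where the disjointness part of the strengthened invariant is indispensable.

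Granting the locality property, the inner while-loop (Lines~\ref{alg:goodcomp:removebad1}--\ref{alg:goodcomp:removebad2}) is handled as follows. By Lemma~\ref{lem:Streett_ds}, $B=\Bad{S}$ is exactly the set of bad vertices of the current $G[S]$. Deleting $\edgeset{B}$ destroys precisely those SCCs of $G[S]$ that contain a vertex of $B$; each reference $\A.\rep{b}$ for $b\in B$ is exactly such an SCC, so Line~\ref{alg:goodcomp:removebfromscc} removes them and nothing else. Every SCC of $G[S]$ disjoint from $B$ is untouched and remains an SCC of $G[S\setminus B]$, and every remaining SCC of $G[S\setminus B]$ arises strictly inside a destroyed SCC and is therefore reported by $\A.\deleteannounce{\edgeset{B}}$ on Line~\ref{alg:goodcomp:dannounce}; together this shows that after the update $\SCCs{S}$ equals the set of SCCs of $G[S\setminus B]$, while $\Remove{S,B}$ resets $S$ to $S\setminus B$. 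Iterating over the passes, when the loop exits, $G[S]$ is bad-free and $\SCCs{S}$ stores exactly its SCCs.

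Finally, the splitting phase (Lines~\ref{alg:goodcomp:removetrivial}--\ref{alg:goodcomp:ds2}). If $G[S]$ has no edge it has no non-trivial SCC, so discarding $D(S)$ is consistent with (ii), and the locality property guarantees the other data structures in $Q$ are still correct, so (i)--(iii) hold. Otherwise at most one SCC of $G[S]$ exceeds $|S|/2$ vertices, so the for-loop at Line~\ref{alg:goodcomp:for1} collects every other SCC into $K$; for each $X\in K$ a fresh $D(X)$ is created with $\SCCs{X}=\{\A.\rep{x}\}=\{X\}$ (correct because $X$ is a single SCC) and enqueued, while $X$'s reference is removed from $\SCCs{S}$ (Lines~\ref{alg:goodcomp:newSmallSCC}--\ref{alg:goodcomp:remSmallSCC}); and if the large SCC remains, $D(S).\Remove{S,R}$ with $R=\bigcup_{X\in K}X$ is enqueued, after which $\SCCs{S\setminus R}$ holds exactly that large SCC, which is still an SCC of $G[S\setminus R]$ since deleting the vertex-disjoint SCCs in $R$ leaves it intact. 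The sets $\{X : X\in K\}$ together with $S\setminus R$ are pairwise disjoint with union $S$, so (i) is restored; every non-trivial SCC of $G[S]$ lies in one of them, which together with the locality property gives (ii) for the whole current graph; and (iii) holds for each new data structure by construction. The remaining branch $X=S$ outputs $G[S]$ as a good component and terminates the algorithm, so no invariant needs to be re-established.
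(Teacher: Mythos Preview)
Your proof is correct and follows precisely the approach the paper has in mind: the paper does not spell out a proof of this lemma in the text, merely noting that the argument is analogous to \cite[Lemmas~3.6 \& 3.7]{CHL17} with the additional bookkeeping that $\SCCs{S}$ is maintained correctly, and your induction on outer-loop iterations with the strengthened invariant (disjointness plus coverage plus $\SCCs{\cdot}$-correctness) is exactly that argument made explicit. The locality observation---that deleting $\edgeset{B}$ with $B\subseteq S$ cannot split any SCC outside $S$, because such an edge would have to be internal to a non-trivial SCC, which by (i)--(ii) lies entirely in $S$---is indeed the one nontrivial point, and your use of disjointness to nail it down is clean.

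One small technical remark: in the inner while-loop you conclude that after Lines~\ref{alg:goodcomp:removebfromscc}--\ref{alg:goodcomp:dannounce} the set $\SCCs{S}$ equals \emph{exactly} the SCCs of $G[S\setminus B]$. Strictly speaking, $\A.\deleteannounce{\edgeset{B}}$ will also report the now-isolated singletons $\{b\}$ for $b\in B$ as new SCCs, so $\SCCs{S}$ may additionally contain these trivial SCCs that no longer lie in the updated $S$. This is harmless---the first clause of the lemma concerns only non-trivial SCCs, and these strays are filtered out later at Line~\ref{alg:goodcomp:removetrivial} or carry no good component---but you should either note it explicitly or read the second clause of the lemma as ``$\SCCs{S}$ contains every SCC of $G[S]$'' rather than as exact equality.
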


We prove the next Lemma by showing that we never remove edges of vertices of
good components.
\begin{lemma}\label{lem:goodcomp:maintaingood}
	After each iteration of the outer while-loop every good component
	of the input graph is contained in one of the subgraphs $G[S]$ for which the
	data structure $D(S)$ is maintained in the list $Q$.
\end{lemma}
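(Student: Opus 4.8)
The plan is to prove, by induction on the iterations of the outer while-loop (Line~\ref{alg:goodcomp:outerwhile}), the slightly stronger invariant that after each iteration: (i)~for every good component $W$ of the input graph $\hat G$ we have $G[W] = \hat G[W]$ (no edge with both endpoints in $W$ has been deleted), and (ii)~$W$ is contained in some set $S$ with $D(S)$ stored in $Q$. Part~(ii) is the claim of the lemma; part~(i) is the strengthening that drives the induction, because it guarantees that $W$ is still a non-trivial strongly connected subgraph of the current graph $G$, so by Lemma~\ref{lem:goodcomp:ds} the SCC of $G$ containing $W$ -- and hence $W$ -- lies inside some $S$ with $D(S)\in Q$. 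The base case is immediate: $Q=\{D(V)\}$, $V\supseteq W$, and no edge has been deleted.

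The mathematical heart of the inductive step is that a vertex of a good component is never declared bad in a set that contains it. Edges are deleted only at Line~\ref{alg:goodcomp:dannounce}, where the incident edges $\edgeset{B}$ of $B=D(S).\Bad{S}=\{v\in S\mid \exists j:\ v\in L_j \text{ and } U_j\cap S=\emptyset\}$ are removed. Suppose $W$ is a good component of $\hat G$ with $W\subseteq S$ and $v\in W\cap B$; then $v\in L_j$ for some $j$ with $U_j\cap S=\emptyset$, but $v\in L_j\cap W$ means $L_j\cap W\neq\emptyset$, so by definition of a good component $U_j\cap W\neq\emptyset$, and $W\subseteq S$ gives $U_j\cap S\neq\emptyset$, a contradiction. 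Hence $B\cap W=\emptyset$; every edge of $\edgeset{B}$ has an endpoint in $B$, which lies outside $W$, so no edge of $\hat G[W]$ is deleted. This preserves~(i) through every iteration of the inner while-loop at Lines~\ref{alg:goodcomp:removebad1}--\ref{alg:goodcomp:removebad2}, and in particular keeps $W\subseteq S$ there.

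It then remains to track $W$ through the rest of the iteration. The sets with data structures in $Q$ are pairwise disjoint (an easy extra invariant, since the pulled set $S$ is replaced by the disjoint split-off SCCs together with the remainder $S\setminus R$), so $W$ lies in exactly one of them. If that set $S'$ is not the pulled set $S$, then $S'$ is not modified in this iteration, and since every deleted edge has an endpoint in $S\neq S'$ the induced subgraph $G[S']$, and hence $\hat G[W]$, is untouched, so~(i) and~(ii) survive trivially for $W$. If $W\subseteq S$, then after the inner while-loop $G[W]=\hat G[W]$ still holds, so $W$ is a non-trivial SCS of $G[S]$ and therefore lies inside a unique SCC $C$ of $G[S]$; I will verify the three cases of the for-loops at Lines~\ref{alg:goodcomp:for1} and~\ref{alg:goodcomp:for2}: if $C=S$ the subgraph $G[S]$ is output and $D(S)$ is re-enqueued (with $R=\emptyset$); if $|C|\leq|S|/2$ then $C$ is added to $K$ and $D(C)$ is enqueued at Line~\ref{alg:goodcomp:ds1}; otherwise $C$ is the unique large SCC, $S\setminus R=C$, and $D(S).\SCCs{S}$ still contains $C$ so $Q.\add{D(S).\Remove{S,R}}$ is executed at Line~\ref{alg:goodcomp:ds2}. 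In every case $W\subseteq C$ ends up in a set whose data structure is in $Q$, and the second for-loop deletes no edges, so~(i) and~(ii) hold after the iteration.

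I expect the main obstacle to be exactly this last, purely combinatorial case analysis over the control flow of Algorithm~\ref{alg:goodcomp}: confirming that the SCC of $G[S]$ containing $W$ is in all cases placed into some data structure in $Q$ (including the corner cases $C=S$ and $S\setminus R=\emptyset$), that no edge deletions occur outside Line~\ref{alg:goodcomp:dannounce}, and that processing one set cannot disturb the induced subgraph of a disjoint set. The overall structure parallels \cite[Lemmas~3.6 \& 3.7]{CHL17}, the new ingredient being the bookkeeping of $\SCCs{S}$ supplied by Lemma~\ref{lem:goodcomp:ds}.
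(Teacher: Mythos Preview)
Your plan is correct and follows the paper's approach: the key observation---that a vertex of a good component can never be declared bad in any set containing it, hence no edge internal to the good component is ever deleted---is exactly what the paper sketches (``we never remove edges of vertices of good components''), and the remaining bookkeeping via Lemma~\ref{lem:goodcomp:ds} is the intended route. One small correction to your case analysis: when $C=S$ the algorithm outputs $G[S]$ at Line~\ref{alg:goodcomp:outputgoodcomp} and terminates rather than re-enqueuing $D(S)$, so the invariant is simply vacuous thereafter; this does not affect your argument.
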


\begin{proposition}\label{lem:goodcomp:correctness}
	Algorithm~\ref{alg:goodcomp} outputs a good component if one exists,
	otherwise the algorithm reports that no such component exists.
\end{proposition}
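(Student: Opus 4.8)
The plan is to combine the two structural lemmas (Lemma~\ref{lem:goodcomp:ds} and Lemma~\ref{lem:goodcomp:maintaingood}) with a termination argument. The proof of Proposition~\ref{lem:goodcomp:correctness} naturally splits into three parts: (i) if the algorithm outputs $G[S]$ at Line~\ref{alg:goodcomp:outputgoodcomp}, then $G[S]$ really is a good component; (ii) if a good component exists in the input graph $\hat G$, the algorithm does output one; and (iii) the algorithm terminates.

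For part (i), soundness of the output: the output is triggered only when $X = S$ for some $X \in D(S).\SCCs{S}$, i.e.\ (by Lemma~\ref{lem:goodcomp:ds}) when $G[S]$ is a single non-trivial SCC. Moreover, the output happens only after the inner while-loop at Lines~\ref{alg:goodcomp:removebad1}--\ref{alg:goodcomp:removebad2} has terminated, so $D(S).\Bad{S} = \emptyset$; by the specification of $\Bad{\cdot}$ in Lemma~\ref{lem:Streett_ds}, this means that for every $j$ with $L_j \cap S \neq \emptyset$ we have $U_j \cap S \neq \emptyset$. Since $G[S]$ is additionally strongly connected and non-trivial (it contains at least one edge, checked at Line~\ref{alg:goodcomp:removetrivial}, and is a single SCC), $G[S]$ satisfies exactly the definition of a good component. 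First I would write this out as a short direct argument.

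For part (ii), completeness: suppose $\hat G$ contains a good component $G[W]$. By Lemma~\ref{lem:goodcomp:maintaingood}, after every iteration of the outer while-loop there is a set $S$ in $Q$ with $W \subseteq S$; this invariant also holds at the moment we would otherwise reach the final \texttt{Return No good component exists}, so $Q$ cannot be empty then unless a good component was already output. It remains to argue that the \emph{particular} set $S \supseteq W$ that survives is eventually pulled from $Q$ and recognized. When such an $S$ is pulled, the inner while-loop removes bad vertices of $G[S]$; I would argue that no vertex of $W$ is ever bad in $G[S]$ (this is essentially the content of the proof of Lemma~\ref{lem:goodcomp:maintaingood}, that we never delete edges incident to good-component vertices), so $W$ stays inside some SCC of the resulting $G[S]$. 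If that SCC equals $S$, we output it and are done; otherwise that SCC has size at most $|S|/2$ and is re-enqueued as a smaller set still containing $W$. Iterating, the size of the set containing $W$ strictly decreases each time it fails to be output, so after at most $\log n$ rounds it must be output.

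For part (iii), termination: I would use the standard potential argument already implicit in Lemma~\ref{lem:goodcomp:runningtime} — every set enqueued at Line~\ref{alg:goodcomp:ds1} has at most half the vertices of the set it was split from, and the residual set $D(S).\Remove{S,R}$ enqueued at Line~\ref{alg:goodcomp:ds2} is strictly smaller than $S$ whenever $R \neq \emptyset$ (and when $R = \emptyset$ but $S$ is not a single SCC, the SCCs each have size $< |S|$; when $S$ is a single non-trivial SCC it is output and not re-enqueued), so the multiset of set-sizes in $Q$ decreases in a well-founded order. Combined with the finite total update time from Lemma~\ref{lem:goodcomp:runningtime}, the while-loop terminates, and if it terminates without output then by part (ii) no good component exists, so the final \texttt{Return} is correct. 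The main obstacle is part (ii): carefully chaining the invariant of Lemma~\ref{lem:goodcomp:maintaingood} with the halving argument to show the surviving superset of $W$ is not merely present but actually processed and eventually detected, rather than forever deferred behind other elements of $Q$; the lockstep/halving structure of the algorithm is what makes this go through, and I would make that explicit.
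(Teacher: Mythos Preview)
Your proposal is correct, but you work harder than the paper does. The paper's proof is essentially your part~(i) plus a one-line contrapositive for completeness: if the algorithm terminates with $Q$ empty and no output, then by Lemma~\ref{lem:goodcomp:maintaingood} no good component can exist, since any good component would still be sitting inside some $S$ in $Q$. Termination is not argued separately; it is implicit in the running-time bound of Lemma~\ref{lem:goodcomp:runningtime}. Your explicit tracking of the superset of $W$ through successive rounds in part~(ii) is therefore unnecessary, and it also contains a small slip: the dichotomy ``either the SCC containing $W$ equals $S$, or it has size $\le |S|/2$'' overlooks the case where $W$ lies in the unique large SCC of size $>|S|/2$ that is strictly contained in $S$; in that case the set containing $W$ is re-enqueued at Line~\ref{alg:goodcomp:ds2} and shrinks but need not halve, so your ``at most $\log n$ rounds'' claim does not hold. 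This does not affect overall correctness---termination via your part~(iii) still rescues the argument---but it illustrates why the paper's direct contrapositive is the cleaner route.
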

\begin{proof}
	First consider the case where Algorithm~\ref{alg:goodcomp} outputs a subgraph $G[S]$.
	We show that $G[S]$ is a good component: 
	Line~\ref{alg:goodcomp:removetrivial} ensures only non-trivial SCSs are considered. 
	After the removal of bad vertices from $S$ in Lines~\ref{alg:goodcomp:removebad1}-\ref{alg:goodcomp:removebad2}, we
	know that for all $1 \leq j \leq k$ that $U_j \cap S \neq \emptyset$ if
	$S \cap L_j \neq \emptyset$. Due to Line~\ref{alg:goodcomp:outputgoodcomp} there
	is only one SCC in $G[S]$ and thus $G[S]$ is a good component.
	Second, if Algorithm~\ref{alg:goodcomp} terminates without a good component, by
	Lemma~\ref{lem:goodcomp:maintaingood}, we have that the initial graph has 
	no good component and thus the result is correct as well.
\end{proof}

The running time bounds for the decremental SCC algorithm of~\cite{BPW19} (cf. Corollary~\ref{cor:returnnewsccs}) only
hold against an oblivious adversary. Thus we have to show that in our algorithm the sequence of edge deletions does not depend on the random choices of the decremental SCC algorithm. 
The key observation is that only the order of the computed SCCs depends on the random choices
of the decremental SCC and we eliminate this effect by sorting the SCCs.
\begin{proposition}\label{prop:goodcomp:obadv}
	The sequence of deleted edges does not depend on the random choices of the decremental SCC
	Algorithm but only on the given instance.
\end{proposition}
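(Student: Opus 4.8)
The plan is to prove the statement by induction on the number of iterations of the outer while loop at Line~\ref{alg:goodcomp:outerwhile}, maintaining the invariant that, immediately before each iteration, (i)~the current graph $G$ — equivalently, the ordered sequence of edges deleted so far — and (ii)~the contents of $Q$, viewed as an \emph{ordered} list of vertex sets, are deterministic functions of the input instance $(\hat G, (L_j,U_j)_{j=1}^k)$, i.e.\ do not depend on the coin flips of the decremental SCC instance $\A$. To make (ii) meaningful I would first fix $Q.\pull$ to a concrete deterministic discipline, say FIFO; since the invariant already supplies a deterministic order of the past enqueue operations $Q.\add{\cdot}$, this makes the pulled set $S$ deterministic as well. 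The base case is immediate: no edge has been deleted and $Q$ holds exactly $D(V)$.

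For the inductive step, consider an iteration that pulls the (by hypothesis deterministic) set $S$. In the inner while loop at Lines~\ref{alg:goodcomp:removebad1}--\ref{alg:goodcomp:removebad2}, the set $B = D(S).\Bad{S}$ equals $\{\,v \in S \mid \exists j\colon v \in L_j \text{ and } U_j \cap S = \emptyset\,\}$ by the specification of $\Bad{\cdot}$, so $B$ depends only on $S$ and the Streett pairs, and hence so does $\edgeset{B}$ in the (deterministic) current graph. The edges removed in this iteration are precisely these sets $\edgeset{B}$, deleted in the deterministic order in which bad vertices are exhausted from $S$; thus the extension of the deletion sequence produced here is deterministic, and the current graph afterwards is again deterministic. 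Next, the SCC decomposition of $G[S]$ is \emph{unique}, hence a deterministic property of the current graph; $\A$ may report these SCCs as references and in an order that depend on its coins, but the underlying collection of vertex sets is fixed. Consequently the subcollection selected into $K$ — the SCCs of size at most $|S|/2$ — is a deterministic collection of vertex sets, and Line~\ref{alg:goodcomp:sort} sorts it by vertex id; therefore the order of the $Q.\add{\cdot}$ operations in the loop at Line~\ref{alg:goodcomp:for2}, together with the final $Q.\add{\cdot}$ of the remainder $D(S).\Remove{S,R}$, is deterministic, so the ordered list $Q$ after the iteration is deterministic. This closes the induction.

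The step I expect to be the real obstacle is ruling out the only channel through which randomness could leak into the deletion sequence: the \emph{order} and the \emph{reference identities} returned by $\A.\rep{\cdot}$ and $\A.\deleteannounce{\cdot}$. The point to establish is that these quantities are never consulted to decide \emph{which} edges to delete: a deletion is triggered only via $\Bad{S}$, whose value is a function of the vertex set $S$ alone; the reported order and reference identities affect only the traversal order of $D(S).\SCCs{S}$ at Line~\ref{alg:goodcomp:for1} and the internal bookkeeping of the $D(\cdot)$ structures, and both are re-canonicalised by the explicit sort at Line~\ref{alg:goodcomp:sort} before any set is enqueued into $Q$. I would also observe that the ``some $x \in X$'' choices at Lines~\ref{alg:goodcomp:newSmallSCC}--\ref{alg:goodcomp:remSmallSCC} are immaterial, since $\A.\rep{x}$ returns (a reference to) the same SCC for every vertex $x$ of a fixed SCC. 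Combining this with the induction above yields that the entire sequence of deleted edges is a deterministic function of the instance, which is exactly what the oblivious-adversary guarantee of Corollary~\ref{cor:returnnewsccs} requires.
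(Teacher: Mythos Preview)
Your argument is correct and follows exactly the approach the paper sketches: the only place randomness of $\A$ could surface is in the order (and identity of the references) of the SCCs returned by $\deleteannounce{\cdot}$, and the explicit sort at Line~\ref{alg:goodcomp:sort} neutralises that before anything is enqueued into $Q$; your induction over iterations of the outer while loop, together with fixing $Q.\pull$ to a deterministic discipline, is precisely the way to turn this observation into a proof. The additional remarks you make---that $\Bad{S}$ depends only on the vertex set $S$, that the SCC decomposition of $G[S]$ is unique, and that the ``some $x\in X$'' choices are immaterial because $\rep{\cdot}$ is constant on an SCC---are the right points to close the remaining gaps.
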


Due to Lemma~\ref{lem:goodcomp:runningtime}, Lemma~\ref{lem:goodcomp:correctness} and
Proposition~\ref{prop:goodcomp:obadv} we obtain the following result.
\begin{theorem}
	In a graph, the winning set for a $k$-pair Streett objective can be computed in $\O(m + b)$ expected time.  
\end{theorem}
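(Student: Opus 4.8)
The plan is to combine the good-component machinery developed above with the standard three-step reduction of~\cite[Chapter~9, Proposition~9.4]{AH04}, recalled here as Algorithm~$\Streett$. First I would compute the SCC decomposition of $G$ in time $O(m+n)$ via depth-first search~\cite{tarjan1972depth}. Since a good component is by definition a non-trivial strongly connected subgraph, it is contained in a single SCC, so it suffices to run good-component detection separately inside each SCC $C$, using the restricted Streett pairs $(L_j\cap C,\, U_j\cap C)$. An SCC is labelled satisfying precisely when this detection succeeds, and the winning set is then obtained as $\GraphReach{T}$, where $T$ is the union of the satisfying SCCs; this is again an $O(m+n)$ depth-first search. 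Hence, apart from the good-component detections, the whole pipeline costs $O(m+n)$.

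For the good-component detection inside a strongly connected graph I would invoke Algorithm~\ref{alg:goodcomp}. Its correctness --- it outputs a good component iff one exists --- is Proposition~\ref{lem:goodcomp:correctness}, which rests on Lemma~\ref{lem:goodcomp:ds} (the auxiliary list $\SCCs{S}$ is maintained correctly for every $S$ kept in $Q$) and Lemma~\ref{lem:goodcomp:maintaingood} (no edge incident to a good component is ever deleted, so such a component always survives inside some stored $S$). Its expected running time is $\O(m+b)$ by Lemma~\ref{lem:goodcomp:runningtime}: the data-structure maintenance and bad-vertex identification are bounded via Lemma~\ref{lem:Streett_ds} together with the halving argument (each vertex and each bit is charged $O(\log n)$ times), while the SCC maintenance is bounded by $O(\decrSCC)$ using Corollary~\ref{cor:returnnewsccs}, where $\decrSCC = O(m\log^4 n)$. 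The delicate point here is that the decremental SCC data structure of Corollary~\ref{cor:returnnewsccs} is randomized and its bound holds only against an \emph{oblivious adversary}; I would therefore invoke Proposition~\ref{prop:goodcomp:obadv}, which shows that once the newly created SCCs are sorted, the sequence of edge deletions fed to that data structure is a function of the fixed input only, restoring the oblivious-adversary guarantee and hence the expected bound $\decrSCC = \O(m)$.

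Finally I would sum over the SCCs: each SCC is processed by its own instance of the decremental SCC algorithm, the induced subgraphs are edge-disjoint, the vertex sets partition $V$, and the restricted pairs satisfy $\sum_C\sum_j(|L_j\cap C|+|U_j\cap C|)\le b$, so the per-SCC subproblem sizes sum to $O(m+n+b)$. By linearity of expectation the total expected cost of all calls to Algorithm~\ref{alg:goodcomp} is $\O(m+b)$, and adding the $O(m+n)$ for the SCC decomposition and the final reachability computation gives expected time $\O(m+b)$ overall. I expect the main obstacle to be not any single estimate but making this composition airtight across the several randomized sub-invocations: one has to check that within each SCC the deletion sequence is a deterministic function of the (fixed) instance --- exactly the content of Proposition~\ref{prop:goodcomp:obadv} --- so that the per-invocation expected bound of Corollary~\ref{cor:returnnewsccs} may legitimately be added up.
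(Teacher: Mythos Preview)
Your proposal is correct and takes essentially the same approach as the paper: the paper's proof is the one-line ``Due to Lemma~\ref{lem:goodcomp:runningtime}, Proposition~\ref{lem:goodcomp:correctness} and Proposition~\ref{prop:goodcomp:obadv} we obtain the following result,'' relying on the Algorithm~$\Streett$ wrapper already described at the start of Section~\ref{sec:streettgraph}. You have simply made explicit the per-SCC decomposition, the disjointness/summing argument, and the role of each cited result, which the paper leaves implicit in its earlier discussion.
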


\section{Algorithms for MDPs}\label{sec:algformdps}

In this section, we present expected near-linear time algorithms for
computing a MEC decomposition, 
deciding almost-sure reachability and
maintaining a MEC decomposition in a decremental setting. 
In the last section, we present an algorithm for \emph{MDPs} with Streett objectives by using the new algorithm
for the decremental MEC decomposition.

\para{Random attractor.}
First, we introduce the notion of a \emph{random attractor} $\attr{R}{T}$ for a set
$T \subseteq V$. The random attractor $A = \attr{R}{T}$ is defined inductively as
follows: $A_0 = T$ and $A_{i+1} = A_i \cup \{v \in V_R \mid \Out{v} \cap
	A_i \neq \emptyset \} \cup \{v \in V_{1} \mid \Out{v} \subseteq A_i 
\}$ for all $i>0$.
Given a set $T$, the random attractor includes all vertices
(1) in $T$, (2) random vertices with an edge to $A_i$, (3) player-1 vertices
with all outgoing edges in $A_i$.
Due to~\cite{I81,B80} we can compute the random attractor
$A = \attr{R}{S}$ of a set $S$ in time $O(\sum_{v \in A} \In{v})$.

\subsection{Maximal End-Component Decomposition}\label{sec:mec}
In this section, we present an expected near linear time algorithm for MEC
decomposition. 
Our algorithm is an efficient implementation of the static algorithm presented in~\cite[p.
29]{CH2014}: The difference is that the bottom SCCs are computed with a dynamic SCC algorithm instead
of recomputing the static SCC algorithm.
A similar algorithm was independently proposed in an unpublished extended version
of~\cite{CHILP16}.

\para{Algorithm Description.}
The MEC algorithm described in Algorithm~\ref{alg:mec_algo} repeatedly
removes bottom SCCs and the corresponding random attractor. 
After removing bottom SCCs the new SCC decomposition with its bottom SCCs is computed using a dynamic SCC algorithm.

\begin{algorithm}[ht]
	\caption{MEC Algorithm}\label{alg:mec_algo}
	\small
	\KwIn{MDP $P = (V, E, \langle V_1, V_R \rangle, \delta)$, decremental SCC algorithm $\A$}
	Invoke an instance $\A$ of the decremental SCC algorithm\;
	Compute the SCC-decomposition of $G = (V,E)$: $C = \{C_1, \dots, C_\ell\}$ \;
	Let $M = \emptyset$; $Q \gets \{ C_i \in C \mid \text{ $C_i$ has no outgoing edges} \}$\;
	\While{$Q$ is not empty}{
		$C \gets \emptyset$\;
		\lFor{$C_k \in Q$}{
			$C \gets C \cup C_k$; $M \gets M \cup \{C_k\}$ 
		}		
		$A \gets \attr{R}{C}$\;\label{alg:mec:attrcomp}
		$Q \gets \A.\deleteannouncenooutgoing{\edgeset{A}}$\;\label{alg:mec:removeattr}
	}
	\Return $M$\;
\end{algorithm}

Correctness follows because our algorithm just removes attractors of bottom SCCs
and marks bottom SCCs as MECs\@. This is
precisely the second static algorithm presented in~\cite[p. 29]{CH2014} except that the bottom SCCs
are computed using a dynamic data structure. 
By using the decremental SCC algorithm described in
Subsection~\ref{subsec:decsccs} we obtain the following lemma.
\begin{lemma}\label{lem:mec:running}
	Algorithm~\ref{alg:mec_algo} returns the MEC-decomposition of an MDP $P$ in expected time $\O(m)$. 
\end{lemma}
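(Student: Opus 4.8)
The plan is to inherit correctness from the static MEC algorithm of~\cite{CH2014} and to bound the running time by charging the three sources of work --- attractor computations, decremental SCC updates, and bookkeeping --- to edges and vertices, each of which is touched only a constant number of times.

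\textbf{Correctness.} I would first argue that Algorithm~\ref{alg:mec_algo} produces the same marked sets as the second static MEC algorithm of~\cite[p.~29]{CH2014}, which repeatedly marks every non-trivial bottom SCC of the current graph as a MEC, deletes the random attractor $\attr{R}{C}$ of the union $C$ of those bottom SCCs, and reports a vertex as lying in no MEC iff it is deleted inside some attractor without ever having been in a bottom SCC. Our algorithm differs only in (i) maintaining the SCC decomposition with the data structure of Corollary~\ref{cor:returnnewsccs} rather than recomputing it from scratch, and (ii) ``removing'' a vertex set $A$ by deleting $\edgeset{A}$, which leaves every vertex of $A$ isolated instead of deleting it. The point to check is that after this edge deletion the data structure's SCCs restricted to $V \setminus A$ coincide with the SCCs of the graph the static algorithm would hold, and that $\deleteannouncenooutgoing{\edgeset{A}}$ hands back exactly the new non-trivial bottom SCCs of that graph --- the now-isolated vertices of $A$ are trivial SCCs, can never be MECs, and are disregarded. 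Hence the sequences of bottom-SCC sets marked and of attractors removed are the same as in the static algorithm, and the returned $M$ is the MEC decomposition.

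\textbf{Running time.} Computing the initial SCC decomposition and preprocessing the data structure of Corollary~\ref{cor:returnnewsccs} takes $O(m+n)$. Every edge of $G$ is deleted at most once over the whole run (once an edge lies in some $\edgeset{A}$ it is gone), so the total cost of all update operations on $\A$ --- in particular the $\deleteannouncenooutgoing$ calls at Line~\ref{alg:mec:removeattr} --- is $O(\decrSCC) = \O(m)$ by Corollary~\ref{cor:returnnewsccs}. For the attractor computations at Line~\ref{alg:mec:attrcomp}, each vertex $v$ lies in at most one attractor that is actually deleted (afterwards $v$ is isolated and only ever a disregarded trivial SCC), so by the bound of~\cite{I81,B80} their total cost is $O\bigl(\sum_{v \in V} |\In{v}|\bigr) = O(m)$. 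Finally, the remaining bookkeeping --- assembling $C$, updating $M$, and replacing $Q$ --- costs $O(1)$ per vertex handled, and each vertex is handled in at most one iteration of the while loop, for a total of $O(n)$. Summing gives expected time $O(\decrSCC + m + n) = \O(m)$.

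\textbf{Oblivious adversary and the main obstacle.} Since the decremental SCC bound holds only against an oblivious adversary, I would finally observe that the sequence of edge deletions performed by Algorithm~\ref{alg:mec_algo} is a deterministic function of the input MDP alone: in each round the deleted set is $\edgeset{\attr{R}{C}}$ where $C$ is the union of \emph{all} current bottom SCCs, and both this union and the induced attractor are independent of the order in which $\deleteannouncenooutgoing$ happens to list the SCCs --- so, unlike in Algorithm~\ref{alg:goodcomp}, no sorting step is needed because bottom SCCs are processed all at once rather than one at a time. I expect the main obstacle to be the correctness bookkeeping: making precise the correspondence with the static algorithm under the edges-only deletion model and the trivial-SCC caveat. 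The running-time charging and the oblivious-adversary check are routine.
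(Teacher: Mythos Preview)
Your proposal is correct and follows essentially the same approach as the paper: correctness by reduction to the static bottom-SCC/attractor algorithm of~\cite{CH2014}, and a running-time bound obtained by charging the initial SCC computation, the attractor computations, and the decremental-SCC updates each to $O(m+n)$ or $O(\decrSCC)$. The paper's own proof is terser --- it dispatches correctness in a single sentence before the lemma and places the oblivious-adversary remark after Theorem~\ref{thm:mec} rather than inside the proof --- but the substance is the same.
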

\begin{proof}
	The running time of algorithm $\A$ is in total time $O(\decrSCC) = \O(m)$ by
	Theorem~\ref{thm:decscc} and Corollary~\ref{cor:returnnewsccs}.
	Initially, computing the SCC decomposition and determining the SCCs with no
	outgoing edges takes time $O(m+n)$ by using~\cite{tarjan1972depth}. 
	Each time we compute the attractor of a bottom
	SCC $C_k$ at Line~\ref{alg:mec:attrcomp} we
	remove it from the graph by deleting all its edges and never
	process these edges and vertices again. Since we can compute the attractor $A$ at Line~\ref{alg:mec:attrcomp} in 
	time $O(\sum_{v \in A} \In{A})$, we need $O(m+n)$ total time for computing the
	attractors of all bottom SCCs.
	Hence, the running time is dominated by the decremental SCC algorithm
	$\A$, which is $O(\decrSCC) = \O(m)$.
\end{proof}

The algorithm uses $O(m+n)$ space due to the fact that the decremental
SCC algorithm $\A$ uses $O(m+n)$ space and $Q$ only contains 
vertices.
\begin{theorem}\label{thm:mec}
	Given an MDP the MEC-decomposition can
	be computed in $\O(m)$ expected time. The algorithm uses $O(m + n)$ space.
\end{theorem}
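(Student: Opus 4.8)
The plan is to assemble the theorem directly from the two intermediate results already established in this subsection. The correctness of the MEC decomposition produced by Algorithm~\ref{alg:mec_algo} has been argued informally: the algorithm repeatedly identifies bottom SCCs of the current graph, records them as MECs, and removes their random attractors. I would make this precise by invoking the equivalence with the second static algorithm of~\cite[p.~29]{CH2014}: since every bottom SCC of the graph $(V,E)$ is a MEC (as observed in the preliminaries, no vertex and hence no random vertex has an outgoing edge), and since removing the random attractor of a bottom SCC is exactly the deletion step of that static algorithm, the output $M$ is the set of all MECs. The only new ingredient is that the bottom SCCs of the successively shrunk graph are obtained via the decremental SCC data structure $\A$ rather than by recomputation; by Corollary~\ref{cor:returnnewsccs}, the operation $\deleteannouncenooutgoing{\cdot}$ returns precisely the new SCCs with no outgoing edges, so the list $Q$ at each iteration contains exactly the current bottom SCCs, matching the static algorithm step for step. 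Hence the output is the MEC decomposition.

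For the running time, I would simply cite Lemma~\ref{lem:mec:running}, which gives expected time $\O(m)$: the decremental SCC algorithm $\A$ contributes $O(\decrSCC) = \O(m)$ total, the initial SCC computation and bottom-SCC identification cost $O(m+n)$ by~\cite{tarjan1972depth}, and the attractor computations over the whole run cost $O(m+n)$ since each vertex and its incoming edges are touched at most once before being deleted. Summing these yields $\O(m)$ expected time (noting $n = O(m)$ once isolated vertices with no edges are handled trivially, as they form singleton non-MEC vertices or are absorbed into attractors).

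For the space bound, I would observe that the decremental SCC data structure of~\cite{BPW19} uses $O(m+n)$ space, the list $Q$ holds references to disjoint sets of vertices and hence stores $O(n)$ items in total at any time, the set $M$ stores a partition of (a subset of) $V$ and so also uses $O(n)$ space, and the attractor computation is performed in place. Summing gives $O(m+n)$ space.

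I do not anticipate a genuine obstacle here; the theorem is a clean restatement of Lemma~\ref{lem:mec:running} together with the correctness discussion and a one-line space accounting. The only point requiring a little care is the correctness argument — specifically, verifying that the sequence of bottom SCCs returned by the decremental structure on the shrinking graph coincides, at each step, with what the static algorithm would compute on the graph after the same deletions — but this is immediate from the specification of $\deleteannouncenooutgoing{\cdot}$ in Corollary~\ref{cor:returnnewsccs} and the fact that deleting the edges incident to the random attractor $A$ is exactly the graph modification the static algorithm performs.
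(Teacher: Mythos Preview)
Your proposal is correct and follows essentially the same approach as the paper: correctness is reduced to the static algorithm of~\cite[p.~29]{CH2014}, the running time is Lemma~\ref{lem:mec:running}, and the space bound comes from the $O(m+n)$ space of the decremental SCC structure together with the observation that $Q$ stores only vertices. The one point you do not mention and the paper does (in a remark after the theorem) is why the oblivious-adversary assumption of~\cite{BPW19} is met here---namely that the sequence of edge deletions in Algorithm~\ref{alg:mec_algo} is fully determined by the input MDP and does not depend on the random choices of $\A$; you may wish to add that sentence.
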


Note that we can use the decremental SCC Algorithm $\A$ of~\cite{BPW19} even though this algorithm only
works against an oblivious adversary as the sequence of deleted edges does not depend on the random choices 
of the decremental SCC Algorithm.

\subsection{Almost-Sure Reachability}
In this section, we present an expected near linear-time algorithm for the
almost-sure reachability problem. 
In the almost-sure reachability problem, we are given an MDP $P$ and a target set $T$ and
we ask for which vertices player~1 has a strategy to reach $T$ almost surely, i.e., $\ASW{\Reach{T}}$.
Due to~\cite[Theorem 4.1]{CDHL16} we can determine the set $\ASW{\Reach{T}}$ in time $O(m + \MEC)$ where $\MEC$ is the running time of the fastest MEC algorithm. 
We use Theorem~\ref{thm:mec} to compute the MEC decomposition and obtain the following theorem.

\begin{theorem}\label{thm:asreach}
	Given an MDP and a set of vertices $T$ we can compute $\ASW{\Reach{T}}$ in
	$\O(m)$ expected time.
\end{theorem}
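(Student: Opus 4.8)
The plan is to reduce the theorem to the two ingredients already in hand: the reduction of~\cite[Theorem~4.1]{CDHL16}, which computes $\ASW{\Reach{T}}$ in time $O(m + \MEC)$ where $\MEC$ denotes the running time of any MEC-decomposition subroutine, and Theorem~\ref{thm:mec}, which supplies a MEC-decomposition algorithm with expected running time $\O(m)$. Plugging the latter into the former gives $\O(m)$ expected time, and that is essentially the whole argument. First I would recall precisely what the reduction of~\cite{CDHL16} does: it performs a constant (or logarithmically) bounded number of alternations between computing the MEC decomposition of the current sub-MDP and taking random attractors to the current target (or to MECs that can reach the target), each attractor computation costing time linear in the edges incident to the removed vertices. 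Since vertices and edges are only removed, never re-added, all the attractor computations together cost $O(m + n)$, and the only nontrivial cost is the MEC-decomposition calls.

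Next I would address the one subtlety that distinguishes this from a purely syntactic substitution: the reduction may invoke the MEC-decomposition subroutine more than once, on a decreasing sequence of sub-MDPs $P_1 \supseteq P_2 \supseteq \dots$, so one must argue that the total cost of all these calls is still $\O(m)$ in expectation. If the reduction makes only $O(1)$ calls this is immediate from Theorem~\ref{thm:mec} and linearity of expectation; if it makes more, one observes that the $i$-th call is on an MDP with $m_i \le m$ edges, and since the structure of~\cite{CDHL16} removes a constant fraction (or at least guarantees a bounded number of rounds), $\sum_i m_i = O(m)$, again giving $\O(m)$ total expected time by linearity of expectation. I would state this explicitly rather than hand-wave it, since it is the only place the ``expected'' qualifier needs care.

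The remaining point to check is the oblivious-adversary condition: the decremental SCC algorithm underlying Theorem~\ref{thm:mec} (via Corollary~\ref{cor:returnnewsccs} and~\cite{BPW19}) is only guaranteed against an adversary that fixes the deletion sequence in advance. But here each invocation of Algorithm~\ref{alg:mec_algo} runs on a fixed input MDP, and as noted after Theorem~\ref{thm:mec} its own deletion sequence is independent of the algorithm's random choices; moreover the sub-MDPs $P_i$ on which successive MEC calls are made are determined by the deterministic attractor computations of the~\cite{CDHL16} reduction together with the deterministic MEC outputs, not by the internal randomness of~\cite{BPW19}. Hence each MEC call faces an oblivious adversary and Theorem~\ref{thm:mec} applies verbatim. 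I expect the main obstacle — really the only one worth writing down — to be this bookkeeping around multiple nested invocations: making sure the edge counts telescope and that obliviousness is preserved through the composition; everything else is a direct citation.

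\begin{proof}
By~\cite[Theorem~4.1]{CDHL16}, the set $\ASW{\Reach{T}}$ can be computed in time $O(m + \MEC)$, where $\MEC$ is the time for a MEC-decomposition subroutine, possibly invoked on a decreasing sequence of sub-MDPs whose edge sets sum to $O(m)$, together with $O(m+n)$ additional work for attractor computations. Using Theorem~\ref{thm:mec} for each MEC-decomposition call, the $i$-th call costs expected $\O(m_i)$ time, where $m_i \le m$ is the number of edges of the sub-MDP passed to it; by linearity of expectation the total expected cost of all MEC calls is $\O(\sum_i m_i) = \O(m)$. Since every MEC call is on an input fixed before the internal randomness of the decremental SCC algorithm of~\cite{BPW19} is revealed, and since (as observed after Theorem~\ref{thm:mec}) the deletion sequence inside each call does not depend on that randomness, the oblivious-adversary hypothesis of Corollary~\ref{cor:returnnewsccs} is met at every call. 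Adding the $O(m+n)$ for attractor and reachability computations, the overall expected running time is $\O(m)$.
\end{proof}
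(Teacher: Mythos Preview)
Your proposal is correct and takes the same approach as the paper: invoke~\cite[Theorem~4.1]{CDHL16} for the $O(m + \MEC)$ reduction and substitute $\MEC = \O(m)$ from Theorem~\ref{thm:mec}. The paper's argument is literally this one-line substitution; your additional care about a possible sequence of MEC calls on shrinking sub-MDPs and about the oblivious-adversary hypothesis is more than the paper writes (the cited $O(m+\MEC)$ bound involves a single MEC call, and the adversary point is already noted after Theorem~\ref{thm:mec}), but it does no harm.
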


\subsection{Decremental Maximal End-Component Decomposition}\label{sec:decmec}
We present an expected near-linear time
algorithm for the MEC-decomposition which supports player-1 edge deletions and a query 
that answers if two vertices are in the same MEC\@.
We need the following lemma from~\cite{CH11} to prove the correctness of our algorithm.
Given an SCC $C$ we consider the set U of the random vertices in $C$ with edges leaving 
$C$. The lemma states that for all non-trivial MECs $X$ in $P$ the intersection with
$U$ is empty, i.e., $\attr{R}{U} \cap X = \emptyset$.
\begin{lemma}[Lemma 2.1(1),~\cite{CH11}]\label{lem:MECcorr}
	Let $C$ be an SCC in $P$. Let $U = \set{v \in C\cap V_R \mid E(v) \cap (V \setminus C) \neq
	\emptyset}$ be the random vertices in $C$ with edges leaving $C$. Let $Z = \attr{R}{U} \cap
	C$. Then for all non-trivial MECs $X$ in $P$ we have $Z \cap X = \emptyset$ and for any edge
	$(u,v)$ with $u \in X$ and $v \in Z$, $u$ must belong to $V_1$.
\end{lemma}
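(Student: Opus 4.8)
The plan is to prove both assertions by a single induction on the layers defining the random attractor, driven by the two closure properties of an end-component. First I would record the structural observation that any non-trivial MEC $X$ is strongly connected, hence all of its vertices lie in one SCC of $(V,E)$; consequently either $X \cap C = \emptyset$, in which case $Z \cap X = \emptyset$ is immediate because $Z \subseteq C$, or $X \subseteq C$, which is the only case that needs work. I would also use the elementary fact that in a non-trivial strongly connected subgraph every vertex has at least one out-neighbour inside the subgraph.

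For the first claim, suppose toward a contradiction that $X \cap \attr{R}{U} \neq \emptyset$ for a non-trivial MEC $X \subseteq C$. Writing $A_0 = U$ and $A_{i+1} = A_i \cup \set{v \in V_R \mid \Out{v} \cap A_i \neq \emptyset} \cup \set{v \in V_1 \mid \Out{v} \subseteq A_i}$, pick the least index $i$ with $A_i \cap X \neq \emptyset$ and some $v \in A_i \cap X$. If $i = 0$, then $v \in U$ is a random vertex of $C$ with an edge leaving $C$; but $v \in X \cap V_R$ and $X$ is an end-component, so every out-edge of $v$ stays inside $X \subseteq C$, a contradiction. If $i > 0$ and $v \in V_R$, then $v$ has an edge to some $w \in A_{i-1}$; the end-component property applied to $v \in X \cap V_R$ gives $w \in X$, so $w \in A_{i-1} \cap X$, contradicting minimality of $i$. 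If $i > 0$ and $v \in V_1$, then $v$ has an out-neighbour $w \in X$ (because $X$ is non-trivial and strongly connected), and $\Out{v} \subseteq A_{i-1}$ forces $w \in A_{i-1} \cap X$, again contradicting minimality. Hence $X \cap \attr{R}{U} = \emptyset$, and therefore $Z \cap X = \emptyset$.

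For the second claim, let $(u,v) \in E$ with $u$ in a non-trivial MEC $X$ and $v \in Z$. If $u \in V_R$, then the end-component closure property for $u \in X \cap V_R$ gives $v \in X$; but $v \in Z$ and $Z \cap X = \emptyset$ by the first part, a contradiction. Therefore $u \in V_1$, as claimed.

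I expect the only real subtlety to be keeping the induction airtight: specifically, justifying that a player-1 vertex of a non-trivial MEC always has an out-neighbour inside the MEC, so that the ``$\Out{v} \subseteq A_{i-1}$'' step actually produces a witness in $X$, and separating the $X \cap C = \emptyset$ case at the outset so that $Z \subseteq C$ can be exploited. Beyond that, the argument is a direct unwinding of the attractor definition against the two defining conditions of an end-component.
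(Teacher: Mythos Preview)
The paper does not provide its own proof of this lemma; it is quoted verbatim from~\cite{CH11} and used as a black box in Section~\ref{sec:decmec}. Your argument is correct and is precisely the standard proof one gives for this fact: an induction on the attractor layers $A_i$, using at each step the two defining closure properties of an end-component (random vertices keep all out-edges inside $X$; non-triviality and strong connectivity guarantee an internal successor for player-1 vertices). The case split on whether $X \subseteq C$ or $X \cap C = \emptyset$ is the right way to dispose of MECs lying in other SCCs, and your derivation of the second assertion from the first via the random-vertex closure property is clean. There is nothing to compare against in the present paper, but your proof matches what one finds in~\cite{CH11}.
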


The \emph{pure MDP graph} $P^P$ of an MDP $P = (V, E, \langle V_1, V_R \rangle, \delta)$ is
the graph which contains only edges in non-trivial
MECs of $P$. More formally, the pure MDP graph $P^P$ is defined as follows:
Let $M_1, \dots M_k$ be the set of MECs of $P$.
$P^P = (V^P, E^P, \langle V_1^P, V_R^P \rangle, \delta^P)$ where
$V^P = V, V_1^P = V_1, V_R^P = V_R$, 
$E^P = \bigcup_{i= 1}^k \{(u,v) \in E \cap (M_i \times M_i)\}$ and
for each $v \in V_R$: $\delta^P(v)$ the uniform distribution over vertices $u$ with $(v,u) \in E^P$.

Throughout the algorithm, we maintain the pure MDP graph $P^P$ for an input MDP
$P$. Note that every non-trivial SCC in $P^P$ is also a MEC due to the fact that there are only edges inside
of MECs. Moreover, a trivial SCC $\{v\}$ is a MEC iff $v \in V_1$.
Note furthermore that when a player-1 edge of an MDP $P$ is
deleted, existing MECs might split up into several MECs 
but no new vertices are added to existing MECs.

Initially, we compute the MEC-decomposition in $\O(m)$ expected time using the algorithm
described in Section~\ref{sec:mec}. Then we remove every edge that is not in a
MEC\@. 
The resulting graph is the pure MDP graph $P^P$. 
Additionally, we invoke a decremental SCC algorithm $\A$ which is able to (1) announce new SCCs
under edge deletions and return a list of their vertices and (2) is able to
answer queries that ask whether two vertices $v,u$ belong to the same SCC\@.
When an edge $(u,v)$ is deleted, we know that (i) the MEC-decomposition stays
the same or (ii) one MEC splits up into new MECs and the rest of the
decomposition stays the same.
We first check if $u$ and $v$ are in the same
MEC, i.e., if it exists in $P^P$. If not, we are done. Otherwise, $u$ and $v$ are
in the same MEC $C$ and either (1) the MEC $C$ does not split or (2) the MEC
$C$ splits. In the case of (1) the SCCs of the pure MDP graph $P^P$ remain intact and nothing needs to be done. In the case of (2) we need to identify the new SCCs $C_1, \dots, C_k$ in $P^P$ using the decremental SCC algorithm $\A$. Let, w.l.o.g., $C_1$ be the SCC with the most vertices. We iterate through every edge
of the vertices in the SCCs $C_2, \dots, C_k$. By considering all the edges, we identify 
all SCCs (including $C_1$) which are also MECs. We remove all edges $(y,z)$ where $y$ and
$z$ are not in the same SCC to maintain the pure MDP graph $P^P$. 
For the SCCs that are not MECs let $U$ be the set of random vertices with edges
leaving its SCC\@. We compute and remove $A= \attr{R}{U}$ (these vertices belong to no MEC due to
Lemma~\ref{lem:MECcorr}) and recursively start the procedure on the new SCCs
generated by the deletion of the attractor. The algorithm is illustrated in Figure~\ref{fig:decr_mec}.
\begin{figure}
	\centering
	\includegraphics[width=0.9\textwidth]{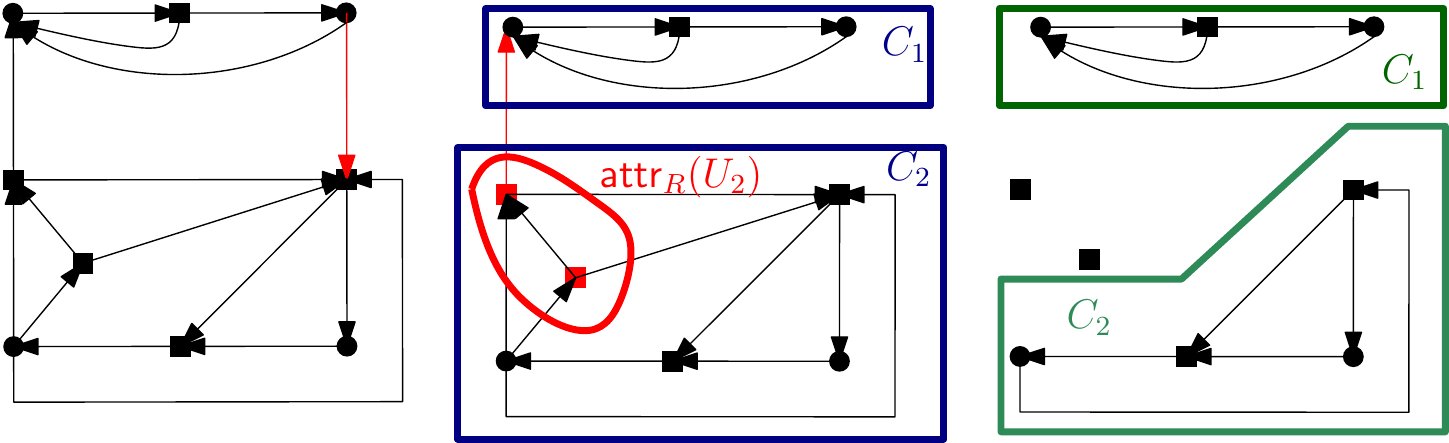}
	\caption{We delete an edge which splits the MEC into two new SCCs $C_1$ and $C_2$. The SCC $C_2$ 
	is not a MEC. We thus compute and remove the attractor of $U_2$ and the resulting SCC is a
MEC.}\label{fig:decr_mec}
\end{figure}
\begin{algorithm}[t]
	\caption{Decremental MEC-update}\label{alg:decr_mec}
	\small
	\KwIn{Player-1 Edge $e=(u,v)$}
	\If{$\A.\query{u,v}=true$}{\label{alg:decr_mec:checkid} 
		List $K \gets \{\A.\deleteannounce{(u,v)}$\}\;\label{alg:decr_mec:removeedge}
		\While{$K \neq \emptyset$}{\label{alg:decr_mec:newsccs} 
			pull a list $J$ of SCCs from $K$\ and let $C_1$ be the largest SCC\;
			$\{C_1, \dots C_k\} \gets$ Sort all SCCs in $J$ except $C_1$ by the smallest vertex id.\;\label{alg:decr_mec:sort}
			$\MEC^{C_1} = \True$, $U_1 \gets \emptyset$\;\label{alg:decr_mec:labelc1true}

			\For{$i = 2;\ i \leq k;\ i\!+\!+$}{\label{alg:decr_mec:iteratesmaller}
				$\MEC^{C_i} = \True$, $U_i \gets \emptyset$\;\label{alg:decr_mec:labelcitrue}

				\For{$e=(s,t)$ where $e \in
					\edgeset{C_i}$}{\label{alg:decr_mec:iterateedges2k}
					\lIf{$(s \notin C_i) \lor (t \notin	C_i)$}{\label{alg:decr_mec:putInD}
						$\A.\delete{e}$
					}

					\lIf{$(s \in V_R \land t \notin
						C_i)$}{\label{alg:decr_mec:labelcifalse}
						$\MEC^{C_i} = \False$; $U_i \gets U_i \cup \set{s}$
					}

					\lIf{$(s \in V_R \land s \in
						C_1)$}{\label{alg:decr_mec:labelc1false}
						$\MEC^{C_1} = \False$; $U_1 \gets U_1 \cup \set{s}$
					}
				}\label{alg:decr_mec:iteratesmaller_end}
				\uIf{$\MEC^{C_i} = \False$}{\label{alg:decr_mec:setMECi}
					$A \gets \attr{R}{U_i}\cap
					C_i$\;\label{alg:decr_mec:attrci}
					$J \gets \A.\deleteannounce{\edgeset{A}}\setminus \left(\bigcup_{a \in A}\A.\rep{a}\right)$\;\label{alg:decr_mec:removeattrci}
					\lIf{$ J \neq \emptyset$}{
						$K \gets K \cup \{J\}$\label{alg:decr_mec:addKi}
					}

				}
			}
			\If{$\MEC^{C_1} = \False$}{\label{alg:decr_mec:setMEC1}
				$A \gets \attr{R}{U_1} \cap C_1$\; \label{alg:decr_mec:attrc1}
				$J \gets \A.\deleteannounce{\edgeset{A}}\setminus \left(\bigcup_{a \in A}\A.\rep{a}\right)$\;\label{alg:decr_mec:removeattrc1}
				\lIf{$ J \neq \emptyset$}{
					$K \gets K \cup \{J\}$\label{alg:decr_mec:addK1}
				}
			}		
		} 
	}
\end{algorithm}

\noindent Lemma~\ref{lem:decr_mec:inv2} describes the key invariants of the while-loop at
Line~\ref{alg:decr_mec:newsccs}. 
We prove it with a straightforward induction on the number
of iterations of the while-loop and apply Lemma~\ref{lem:MECcorr}.
\begin{lemma}\label{lem:decr_mec:inv2}
	Assume that $\A$ maintains the pure MDP graph $P^P$
	before the deletion of $e = (u,v)$ then the while-loop at Line~\ref{alg:decr_mec:newsccs} maintains the
	following invariants: 
	\begin{enumerate}
	\item For the graph stored in $\A$  and all lists of SCCs $\{C_1, \dots, C_k\}$ in $K$ there are only edges inside the
		SCCs or between the SCCs in the list, i.e., 
		for each $(x,y) \in \bigcup_{j=0}^k E[C_j]$ we have $x,y \in \bigcup_{j=0}^k C_j$.
	\item If a non-trivial SCC of the graph in $\A$ is not a MEC of the current MDP it is in $K$.
	\item If $M$ is a MEC of the current MDP then we do not delete an edge of $M$ in the
		while-loop. 
	\end{enumerate}
\end{lemma}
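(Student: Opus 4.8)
The plan is to prove the three invariants simultaneously by induction on the number of completed iterations of the while-loop at Line~\ref{alg:decr_mec:newsccs}, using Lemma~\ref{lem:MECcorr} as the main structural tool. For the base case, before the first iteration the list $K$ contains the single list of SCCs produced by $\A.\deleteannounce{(u,v)}$. Since by assumption $\A$ maintained the pure MDP graph $P^P$ before the deletion and $u,v$ were in the same MEC $C$ (this is what the test at Line~\ref{alg:decr_mec:checkid} checks), the only SCCs that can split are those inside $C$; all of $C$'s internal edges go between these new pieces, and no edge leaves the old SCC $C$ (it was a MEC, hence had all random edges inside, and in $P^P$ a non-trivial SCC equals a MEC). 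This gives invariant~1 for the base case. Invariant~2 holds because the SCCs outside $C$ are unchanged and were already MECs, so the only non-MEC non-trivial SCCs are among the new pieces of $C$, which are exactly the ones placed in $K$. Invariant~3 is vacuous before any edge is deleted in the loop.

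For the inductive step, I would consider one iteration that pulls a list $J = \{C_1,\dots,C_k\}$ from $K$ (with $C_1$ largest) and walks the inner for-loops. The edges deleted in this iteration are of two kinds: (i) at Line~\ref{alg:decr_mec:putInD}, edges $(s,t)$ with an endpoint outside its $C_i$ — by invariant~1 such an endpoint lies in some other $C_j$ of the same list, so these are precisely the inter-SCC edges of the list, and deleting them does not touch any edge internal to a surviving SCC; (ii) at Lines~\ref{alg:decr_mec:removeattrci} and \ref{alg:decr_mec:removeattrc1}, the edges incident to an attractor $A = \attr{R}{U_i}\cap C_i$ of the random vertices $U_i$ that have edges leaving $C_i$. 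To establish invariant~3 I would argue that none of these edges belongs to a current MEC: for kind~(i), a MEC is strongly connected so it lies inside a single $C_i$ and has no inter-$C_i$ edge; for kind~(ii), Lemma~\ref{lem:MECcorr} applied to the SCC $C_i$ says $Z = \attr{R}{U}\cap C_i$ is disjoint from every non-trivial MEC, and moreover any edge $(u,v)$ with $u$ in a MEC and $v\in Z$ has $u\in V_1$ — but we only ever delete \emph{player-1} edges here (the deletion at Line~\ref{alg:decr_mec:removeedge} is the input player-1 edge, and $\edgeset{A}$-edges from a MEC into $A$ have their tail in $V_1$ by exactly this clause, while random edges into $A$ originate in $V_R$, so their tail cannot be in a MEC as it would have an edge leaving the MEC). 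Care is needed that $A$ itself contains no MEC vertex: $A\subseteq Z$, so this is immediate from Lemma~\ref{lem:MECcorr}.

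For invariants~1 and~2 in the inductive step, I would track what is re-added to $K$: after removing the attractor $A$ from $C_i$, the call $\A.\deleteannounce{\edgeset{A}}\setminus\bigcup_{a\in A}\A.\rep{a}$ returns the new SCCs obtained from $C_i\setminus A$. Since $C_i$ had, by invariant~1, no edges leaving the current list except to other $C_j$'s, and we have now deleted all those inter-SCC edges for $C_i$ at Line~\ref{alg:decr_mec:putInD}, the graph restricted to $C_i\setminus A$ is closed, so its new SCCs again satisfy the "only internal and inter-SCC edges" property — giving invariant~1 for the freshly added list. For invariant~2, I would check that every non-trivial SCC that fails to be a MEC ends up in $K$: an SCC among $C_2,\dots,C_k$ that is found to have a random vertex with an edge leaving it is flagged $\MEC^{C_i}=\False$ (Line~\ref{alg:decr_mec:labelcifalse}), its attractor is stripped, and the residual SCCs are pushed to $K$; the same holds for $C_1$ via the $s\in C_1$ test at Line~\ref{alg:decr_mec:labelc1false} and the block at Line~\ref{alg:decr_mec:setMEC1}; an SCC never flagged $\False$ has all random vertices with all edges inside, hence is a MEC and correctly left out. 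The one subtlety is that a single edge-scan of $\edgeset{C_i}$ must detect leaving random edges for \emph{both} $C_i$ and $C_1$ — this is why the loop also tests $s\in C_1$ — and that after recursion the newly created SCCs inherit invariant~1 so the argument closes.

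I expect the main obstacle to be invariant~3: one must be scrupulous that \emph{every} edge deleted anywhere in the loop — the inter-SCC cleanup edges, the attractor-incident edges, and implicitly the input edge — is either not in any current MEC or is a player-1 edge whose removal is permitted, and that this is preserved as MECs themselves shrink across recursive calls. The clean way is to phrase it as: the set of edges lying in current MECs is exactly $\bigcup_i E[M_i]$ for the current MEC-decomposition, each such edge set is strongly connected and contains all out-edges of its random vertices, and the two deletion rules provably never pick such an edge — the first because it would be an inter-SCC edge, the second by the $u\in V_1$ clause of Lemma~\ref{lem:MECcorr} together with the observation that a random vertex in $A$ cannot lie in a MEC.
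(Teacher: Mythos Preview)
Your proposal is correct and follows essentially the same approach as the paper, which states only that the lemma is proved ``by a straightforward induction on the number of iterations of the while-loop and apply Lemma~\ref{lem:MECcorr}''; your base case, your split into inter-SCC versus attractor deletions, and your use of Lemma~\ref{lem:MECcorr} to show the attractor is disjoint from all non-trivial MECs are exactly the intended ingredients. One small simplification: for invariant~3 in the attractor case you do not need the $u\in V_1$ clause of Lemma~\ref{lem:MECcorr} at all---once $A=\attr{R}{U_i}\cap C_i$ is disjoint from every MEC, any edge in $\edgeset{A}$ has an endpoint outside every MEC and hence cannot be an edge \emph{of} a MEC, so the argument closes immediately.
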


\begin{proposition}\label{prop:decrMECcorrectness}
	Algorithm~\ref{alg:decr_mec} maintains the pure MDP graph $P^P$ in the data structure $\A$ under
	player-1 edge deletions.
\end{proposition}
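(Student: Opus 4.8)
The plan is to prove Proposition~\ref{prop:decrMECcorrectness} by induction on the sequence of player-1 edge deletions, with Lemma~\ref{lem:decr_mec:inv2} carrying the burden of a single deletion. For the base case, recall that $\A$ is initialized with the pure MDP graph $P^P$: one computes the MEC-decomposition of $P$ via Theorem~\ref{thm:mec} and deletes every edge not lying inside a MEC. For the inductive step, assume that before the deletion of a player-1 edge $e=(u,v)$ the graph stored in $\A$ equals the pure MDP graph $P^P$ of the current MDP $P$, and let $P'$ be $P$ with $e$ removed; I must show that after Algorithm~\ref{alg:decr_mec} the graph in $\A$ equals the pure MDP graph of $P'$. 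Note throughout that since $e$ is a player-1 edge, no random vertex loses an outgoing edge, so every end-component of $P'$ is already an end-component of $P$, and hence every MEC of $P'$ lies inside a MEC of $P$; conversely, removing an edge that lies outside all MECs destroys no MEC (each MEC stays strongly connected on its internal edges and its random vertices keep all their edges inside) and creates none.

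First I would dispose of the case $\A.\query{u,v}=\False$ at Line~\ref{alg:decr_mec:checkid}. Then $u$ and $v$ lie in different SCCs of $P^P$; since every non-trivial SCC of $P^P$ is a MEC and a trivial SCC $\{w\}$ is a MEC iff $w\in V_1$, the edge $e$ is not contained in any MEC of $P$. By the observation above the MEC-decomposition of $P'$ coincides with that of $P$, so the pure MDP graph is unchanged, and Algorithm~\ref{alg:decr_mec} correctly does nothing.

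The main case is $\A.\query{u,v}=\True$, so $u,v$ lie in a common MEC $C$, which may or may not split after $e$ is removed at Line~\ref{alg:decr_mec:removeedge}. Here I would invoke the three invariants of Lemma~\ref{lem:decr_mec:inv2}. At termination the worklist $K$ is empty, so invariant~2 yields that every non-trivial SCC of the graph currently in $\A$ is a MEC of $P'$. Tracking the deletions shows this graph has edges only inside its SCCs: when a list $\{C_1,\dots,C_k\}$ is processed, Line~\ref{alg:decr_mec:putInD} removes every edge incident to some $C_i$ with $i\ge 2$ whose other endpoint leaves $C_i$ -- in particular every edge between distinct SCCs of the list -- while by invariant~1 there were no edges leaving $C_1\cup\dots\cup C_k$ to begin with; the only further deletions are of edges in $\edgeset{A}$ for attractor sets $A=\attr{R}{U_i}\cap C_i$, and the SCCs they create are pushed back onto $K$. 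Conversely, invariant~3 guarantees that no edge lying inside a MEC of $P'$ is ever deleted: cross-SCC edges cannot lie inside a strongly connected MEC, and for the attractor deletions Lemma~\ref{lem:MECcorr} ensures $A$ is disjoint from every non-trivial MEC of $P'$, so an edge with both endpoints in such a MEC is not incident to $A$. Since by the induction hypothesis $\A$ held all edges of all MECs of $P$ before the deletion, and every MEC of $P'$ is contained in a MEC of $P$, the graph $\A$ started with contained all edges inside MECs of $P'$; combined with the two observations above, the graph in $\A$ at termination contains exactly the edges inside MECs of $P'$, i.e., it is the pure MDP graph of $P'$.

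I expect the genuine difficulty to be concentrated in Lemma~\ref{lem:decr_mec:inv2} rather than in the proposition itself: one must establish, by induction on the while-loop iterations, that (i) the graph in $\A$ together with the lists in $K$ never carries edges across a recursion group, (ii) every non-trivial non-MEC SCC is on the worklist, and (iii) no MEC edge is ever touched -- all three hinging on Lemma~\ref{lem:MECcorr} applied to each $C_i$ with $U_i$ its random vertices having edges leaving $C_i$, whose attractor within $C_i$ captures precisely the vertices excluded from every non-trivial MEC while the labels $\MEC^{C_i}$ record whether $C_i$ is already a MEC. Given Lemma~\ref{lem:decr_mec:inv2}, the termination argument above completes the proof of the proposition.
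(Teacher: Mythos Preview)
Your proposal is correct and follows essentially the same approach as the paper: both argue by induction on deletions, invoke Lemma~\ref{lem:decr_mec:inv2}(2,3) at termination ($K=\emptyset$) to conclude that every non-trivial SCC in $\A$ is a MEC of the updated MDP and that no MEC edge was removed, and use invariant~(1) together with Line~\ref{alg:decr_mec:putInD} to argue that no cross-SCC edges survive. Your version is somewhat more detailed---you explicitly handle the $\A.\query{u,v}=\False$ case and spell out why every MEC of $P'$ lies inside a MEC of $P$---but the structure and the key lemma usage are the same.
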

\begin{proof}
	We show that after deleting an edge using Algorithm~\ref{alg:decr_mec}
	(i) every non-trivial SCC is a MEC and vice-versa, and (ii) there are no edges going from one MEC to another.
	Initially, we compute the pure MDP graph and both conditions are fulfilled.

	When we delete an edge and the while-loop at Line~\ref{alg:decr_mec:newsccs}
	terminates (i) is true due to Lemma~\ref{lem:decr_mec:inv2}(2,3). 
	That is, as we never delete edges within MECs they are still strongly
	connected  and
	when the while-loop terminates, $K = \emptyset$ which means that all SCCs
	are MECs. 	

	For (ii) notice that each SCC is once processed as a List $J$. 
	Consider an arbitrary SCC $C_i$ and the corresponding list of SCCs $J = \{C_1, \dots, C_k\}$ of the
	iteration in which $C_i$ was identified as a MEC\@. By
	Lemma~\ref{lem:decr_mec:inv2}(1) there are no edges to SCCs not in the list. 
	Additionally, due to Line~\ref{alg:decr_mec:putInD} we remove all edges from $C_i$ to other SCCs
	in $J$.
\end{proof}

Now that we maintain the pure MDP graph $P^P$ in $\A$,
we can answer MEC queries of the form:
$\query{u,v}$: Returns whether $u$ and $v$ are in the same MEC in $P$, by an SCC query
$\A.\query{u,v}$ on the pure MDP graph $P^P$. 

The key idea for the running time of Algorithm~\ref{alg:decr_mec} is that
we do not look at edges of the largest SCCs but the new SCC decomposition by inspecting the edges of
the smaller SCCs. Note that we identify the largest SCC by processing the SCCs in a lockstep manner. This can only happen $\lceil \log n
\rceil$ times for each edge. Additionally, when we sort the SCCs, we only
look at the vertex ids of the smaller SCCs and when we charge this cost to the
vertices we need $O(n \log^2 n)$ additional time.

\begin{proposition}\label{prop:decrMECrunning}
	Algorithm~\ref{alg:decr_mec} maintains the MEC-decomposition of	$P$ under
	player-1 edge deletions in expected total time $\O(m)$. Algorithm~\ref{alg:decr_mec} answers queries
	that ask whether two vertices $v,u$ belong to the same MEC in $O(1)$. The
	algorithm uses $O(m + n)$ space.
\end{proposition}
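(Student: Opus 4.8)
The plan is to bound separately the three sources of work: (a) the calls into the decremental SCC data structure $\A$, (b) the attractor computations, and (c) the bookkeeping to identify the largest SCC in each split and to sort the smaller ones. For (a), every edge of $P^P$ is deleted at most once — either explicitly at Line~\ref{alg:decr_mec:putInD}, or implicitly through $\deleteannounce{\edgeset{A}}$ when its endpoint is absorbed into a removed attractor, or as the single deleted edge $(u,v)$ that triggered the update — so over the whole sequence of deletions $\A$ processes a total of $O(m)$ edge deletions. By Corollary~\ref{cor:returnnewsccs} (the version of \cite{BPW19} with the announce operations) the total expected time spent inside $\A$, including returning the lists of vertices of the new SCCs, is $O(\decrSCC) = \widetilde O(m)$. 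Note that $\A.\query{u,v}$ at Line~\ref{alg:decr_mec:checkid} is $O(1)$ worst case, so MEC queries cost $O(1)$, giving the second claim.

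For (c) — the $\widetilde O(m)$ ``overhead'' of Algorithm~\ref{alg:decr_mec} on top of $\A$ — the key accounting fact is that whenever a list $J=\{C_1,\dots,C_k\}$ of new SCCs is processed, the algorithm touches edges and vertex ids only of the non-largest pieces $C_2,\dots,C_k$ (Lines~\ref{alg:decr_mec:iteratesmaller}--\ref{alg:decr_mec:iteratesmaller_end}), and each $C_i$ with $i\ge 2$ has $|C_i| \le |C_1 \cup \dots \cup C_k|/2$. Hence a fixed vertex $v$ lies in a ``small'' part at most $\lceil \log n\rceil$ times over the entire run, and by the same token each edge incident to $v$ is inspected at Line~\ref{alg:decr_mec:iterateedges2k} at most $O(\log n)$ times; summing gives $O(m\log n)$ for the inner edge scans and $O(n\log^2 n)$ for the sorting at Line~\ref{alg:decr_mec:sort} (the extra $\log$ from sorting a list whose total size is already charged). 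The largest SCC $C_1$ is identified in a lockstep fashion — advancing one vertex at a time in all pieces of $J$ in parallel and stopping as soon as all but one piece are exhausted — so identifying $C_1$ also costs only work proportional to the small pieces. Thus part (c) is $\widetilde O(m)$ in total.

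For (b), each attractor computation $A \gets \attr{R}{U_i}\cap C_i$ at Line~\ref{alg:decr_mec:attrci} (and likewise Line~\ref{alg:decr_mec:attrc1}) costs $O(\sum_{v\in A}|\In{v}|)$ by the random-attractor bound of \cite{I81,B80}; but by Proposition~\ref{prop:decrMECcorrectness} together with Lemma~\ref{lem:MECcorr}, the vertices of such an attractor belong to no non-trivial MEC of the current MDP, so immediately afterwards all their incident edges are removed from $\A$ (via $\deleteannounce{\edgeset{A}}$ at Line~\ref{alg:decr_mec:removeattrci}) and they are never revisited. Hence over the whole deletion sequence the attractor work telescopes to $O(m)$. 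The initial preprocessing — computing the MEC decomposition once (Theorem~\ref{thm:mec}) and deleting all non-MEC edges to form $P^P$ — is $\widetilde O(m)$. Adding the three parts gives expected total time $O(\decrSCC + m\log n + n\log^2 n) = \widetilde O(m)$; the $O(m+n)$ space bound follows since $\A$ uses $O(m+n)$ space and the lists $K$ only ever store (references to) vertices, each appearing $O(\log n)$ times, which we can amortize to $O(m+n)$, or simply note each vertex is in at most one pending list at a time. The main obstacle is the ``small-pieces'' charging argument for part (c): one must be careful that the lockstep identification of $C_1$ and the sorting never force us to pay for the large part, and that recursive re-splitting of an already-small SCC is still covered by the global $O(\log n)$-appearances bound — which it is, since sizes only shrink.

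Finally, Proposition~\ref{prop:goodcomp:obadv}-style reasoning applies here as well: sorting the SCC lists at Line~\ref{alg:decr_mec:sort} by smallest vertex id ensures the sequence of edges we feed to $\A$ depends only on the input MDP and the externally chosen sequence of player-1 deletions, not on $\A$'s internal randomness, so the oblivious-adversary hypothesis of Corollary~\ref{cor:returnnewsccs} is met and the expected-time bound for $\A$ is valid.
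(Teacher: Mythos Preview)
Your proposal is correct and follows essentially the same approach the paper sketches: charge all work to the decremental SCC data structure (total $O(\decrSCC)$), to edges incident to removed attractors (each edge paid once), and to the ``small-half'' SCCs via the standard $O(\log n)$-appearances argument, with lockstep identification of $C_1$ and $O(n\log^2 n)$ for sorting. The only cosmetic point is that the oblivious-adversary remark you phrase via Proposition~\ref{prop:goodcomp:obadv} is stated separately for this algorithm as Proposition~\ref{prop:mec_decr_oblivious}, but the content is identical.
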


Due to the fact that the decremental SCC algorithm we use in Corollary~\ref{cor:returnnewsccs} only
works for an oblivious adversary, we prove the following proposition. The key
idea is that we sort SCCs returned by the decremental SCC Algorithm. Thus, 
the order in which new SCCs are returned does only depend on the given instance.
\begin{proposition}\label{prop:mec_decr_oblivious}
	The sequence of deleted edges does not depend on the random choices of the
	decremental SCC Algorithm but only on the given instance.
\end{proposition}

The algorithm presented in~\cite{BPW19} fulfills all the conditions of
Proposition~\ref{prop:decrMECrunning} due to Corollary~\ref{cor:returnnewsccs}. Therefore we obtain the following theorem due to
Proposition~\ref{prop:decrMECcorrectness} and
Proposition~\ref{prop:decrMECrunning}.

\begin{theorem}\label{thm:mec_decr}
	Given an MDP with $n$ vertices and $m$ edges, the MEC-decomposition can be maintained under the deletion of $O(m)$ player-1 edges
	in total expected time $\O(m)$ and we can answer queries that ask whether two vertices $v,u$ belong to the same MEC in $O(1)$ time. 
	The	algorithm uses $O(m + n)$ space. The bound holds against an oblivious
	adversary.
\end{theorem}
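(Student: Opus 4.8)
The plan is to assemble the theorem from the three statements already established for Algorithm~\ref{alg:decr_mec}, instantiated with the concrete decremental SCC data structure of Corollary~\ref{cor:returnnewsccs}. First I would set up the data structure: run the static MEC-decomposition algorithm of Theorem~\ref{thm:mec} once in expected $\O(m)$ time, delete every edge not contained in a MEC to obtain the pure MDP graph $P^P$, and hand $P^P$ to an instance $\A$ of the decremental SCC algorithm of Corollary~\ref{cor:returnnewsccs}; each of the at most $m$ player-1 edge deletions is then serviced by a call to Algorithm~\ref{alg:decr_mec}.

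For correctness I would invoke Proposition~\ref{prop:decrMECcorrectness}, which guarantees that after every deletion $\A$ still stores exactly $P^P$, i.e.\ every non-trivial SCC of the graph in $\A$ is a MEC of the current MDP and vice versa and there are no edges between distinct MECs. Together with the observation (noted in the text) that a trivial SCC $\{v\}$ is a MEC iff $v\in V_1$, this lets a MEC query $\query{u,v}$ be answered by the SCC query $\A.\query{u,v}$, which is worst-case $O(1)$ by Corollary~\ref{cor:returnnewsccs}.

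For the running time I would appeal to Proposition~\ref{prop:decrMECrunning}: the total work splits into the one-time $\O(m)$ preprocessing; the $O(\decrSCC)=\O(m)$ charged to $\A$ over the whole deletion sequence (Corollary~\ref{cor:returnnewsccs}); the random-attractor computations, which are paid for by the vertices and edges they remove and hence sum to $O(m+n)$; and the small-SCC bookkeeping (scanning the edges of all but the largest new SCC, and sorting those SCCs by smallest vertex id), where the standard halving argument --- a vertex can lie in the ``small'' side of a split only $\lceil\log n\rceil$ times --- charges each vertex and each edge $O(\log n)$ times, contributing $O(m\log n + n\log^2 n)$. The sum is $\O(m)$, and the space is $O(m+n)$ since $\A$ uses $O(m+n)$ space and the auxiliary lists hold only (references to) vertices.

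The remaining point, and the one I expect to be the most delicate, is the oblivious-adversary condition: the decremental SCC algorithm of \cite{BPW19} only guarantees its bounds when the deletion sequence is independent of its internal coin flips. Here I would cite Proposition~\ref{prop:mec_decr_oblivious}: the only way randomness of $\A$ can leak into Algorithm~\ref{alg:decr_mec} is through the order in which $\A$ reports newly created SCCs, and sorting that list by smallest vertex id (Line~\ref{alg:decr_mec:sort}) makes the subsequent edges we delete --- the edges between the separated SCCs and the edges of the computed random attractors --- a deterministic function of the input MDP and the adversary's deletion requests alone. The hard part of writing this out carefully is verifying that no tie-breaking internal to $\A$ (e.g.\ which SCC is declared ``largest'') can change which edges we delete; once that is pinned down, the expected bound of Corollary~\ref{cor:returnnewsccs} applies verbatim and the theorem follows from Propositions~\ref{prop:decrMECcorrectness}, \ref{prop:decrMECrunning}, and~\ref{prop:mec_decr_oblivious}.
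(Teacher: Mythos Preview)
Your proposal is correct and follows exactly the paper's approach: the theorem is obtained by instantiating Algorithm~\ref{alg:decr_mec} with the decremental SCC data structure of Corollary~\ref{cor:returnnewsccs} and then citing Propositions~\ref{prop:decrMECcorrectness}, \ref{prop:decrMECrunning}, and~\ref{prop:mec_decr_oblivious}. Your write-up is in fact more detailed than the paper's one-sentence derivation, but the logical structure is identical.
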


\subsection{MDPs with Streett Objectives}

Similar to graphs we compute the winning region of Streett objectives with $k$ pairs $(L_i,U_i)$ (for $1
\leq i \leq k$) for an MDP $P$ as follows: 
\begin{enumerate}
\item We compute the MEC-decomposition of $P$.
\item For each MEC, we find good end-components, i.e., end-components where
	$L_i \cap X = \emptyset$ or $U_i \cap X \neq \emptyset$ for all $1 \leq i \leq k$ and label the MEC as satisfying.
\item We output the set of vertices that can almost-surely reach a satisfying MECs.
\end{enumerate}
For 2., we find good \emph{end-components} similar to how we find good
components as in Section~\ref{sec:streettgraph}.
The key idea is to use the decremental MEC-Algorithm described in
Section~\ref{sec:decmec} instead of the decremental SCC Algorithm. 
We modify the Algorithm presented in Section~\ref{sec:streettgraph} as follows
to detect good end-components:
First, we use the decremental MEC-algorithm instead of the decremental SCC Algorithm.
Towards this goal, we augment the decremental MEC-algorithm with a function to return a list
of references to the new MECs when we delete a set of edges. 
Second, the decremental MEC-algorithm does not allow the deletion of arbitrary edges, but only player-1 edges. 
To overcome this obstacle, we create an equivalent instance where we remove player-1 edges when we remove `bad' vertices. 
\begin{lemma}\label{cor:returnnewmecs}
	Given an MDP $P = (V,E,\langle V_1, V_R \rangle, \delta)$ with $m$ edges and $n$ vertices, we can maintain a
	data structure that supports the operation
	\begin{itemize}
	\item $\deleteannounce{E}$: Deletes the set of $E$ of player-1 edges $(u,v)$ from
		the MDP $P$. If the edge deletion creates new MECs $C_1, \dots, C_k$ the
		operation returns a list $Q = \{ C_1, \dots, C_k \}$ of references
		to the new non-trivial MECs.
	\end{itemize}
	in total expected update time $\O(m)$. The bound holds against an oblivious adaptive adversary.
\end{lemma}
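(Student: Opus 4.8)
The plan is to build the data structure of Lemma~\ref{cor:returnnewmecs} as a thin wrapper around the decremental MEC data structure of Theorem~\ref{thm:mec_decr}, handling two gaps: (i) the underlying structure answers same-MEC \emph{queries} but does not \emph{announce} newly created MECs, and (ii) it accepts only player-1 edge deletions, whereas $\deleteannounce{E}$ is called with an arbitrary edge set $E$ (coming from $\edgeset{B}$ for a bad-vertex set $B$). The first gap is closed exactly as in Corollary~\ref{cor:returnnewsccs} for SCCs: when a MEC splits, distinguish the unique largest resulting MEC from the small ones, each of which has at most half the size of the pre-split MEC; collect references to all resulting non-trivial MECs by walking the vertices of the small pieces (the large one is identified by exclusion in lockstep), and charge the work to the $O(\log n)$ appearances of each vertex in a small MEC. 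This is precisely the bookkeeping already carried out inside Algorithm~\ref{alg:decr_mec} (it maintains lists of the new SCCs/MECs as $J$ and $K$), so the announcement interface can be read off the run of that algorithm with only $O(m\log n)$ overhead on top of $O(\decrSCC)=\O(m)$.

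For the second gap I would reduce deletion of an arbitrary edge $e=(x,y)$ to a player-1-only deletion by a standard gadget: before preprocessing, subdivide every random edge $(x,y)$ with $x\in V_R$, i.e.\ replace it by $x\to p_{x,y}\to y$ where $p_{x,y}$ is a fresh player-1 vertex with the single outgoing edge $(p_{x,y},y)$. In the modified MDP $P'$ every original random edge corresponds to a player-1 edge $(p_{x,y},y)$, so deleting a random edge becomes deleting a player-1 edge, which Theorem~\ref{thm:mec_decr} supports. This at most doubles $n$ and $m$, so the $\O(m)$ bound is unaffected; and because a random vertex $x$ still has exactly the same reachable successors through the fresh vertices, the end-component structure is preserved up to the obvious contraction — a set $X\subseteq V$ is a (non-trivial) MEC of $P$ iff $X$ together with all subdivision vertices $p_{x,y}$ with $x,y\in X$ is a (non-trivial) MEC of $P'$. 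The $\deleteannounce{E}$ operation then deletes, for each $e\in E$, the corresponding player-1 edge of $P'$, and translates each announced new MEC of $P'$ back by dropping subdivision vertices; since $E$ in our application is $\edgeset{B}$, every vertex of $B$ is removed by deleting all $O(|B|)$ of its incident edges, one at a time, and the subdivision vertices attached to removed vertices become isolated and are discarded.

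The correctness argument is then: (a) the gadget equivalence above, proved by a short structural check against the MEC definition (strong connectivity of the induced subgraph plus closure of random vertices' out-edges), together with the fact that deleting all incident edges of a vertex simulates its removal; (b) Proposition~\ref{prop:decrMECcorrectness} and Theorem~\ref{thm:mec_decr}, which guarantee the wrapped structure keeps the pure MDP graph / MEC-decomposition of $P'$ correct under the player-1 deletions we issue; (c) the announcement of non-trivial MECs is complete and duplicate-free by the small-MEC charging argument, exactly mirroring the third bullet of Corollary~\ref{cor:returnnewsccs}. For the time bound, the preprocessing cost is $\O(m)$ by Theorem~\ref{thm:mec}, the total update time of the wrapped structure is $\O(m)$ by Theorem~\ref{thm:mec_decr}, and the announcement overhead is $O(m\log n)=\O(m)$; summing over the whole deletion sequence gives total expected update time $\O(m)$. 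Finally, the oblivious-adversary clause is inherited from Theorem~\ref{thm:mec_decr} provided the issued deletion sequence is independent of the algorithm's random bits; as in Proposition~\ref{prop:goodcomp:obadv} and Proposition~\ref{prop:mec_decr_oblivious}, this is ensured because the only thing the caller sees that could depend on randomness is the \emph{order} of announced MECs, and the caller (the good-end-component routine) sorts them before acting.

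I expect the main obstacle to be item~(a): making the gadget-equivalence lemma fully precise, in particular checking that \emph{trivial} versus \emph{non-trivial} MECs are handled correctly (a fresh vertex $p_{x,y}$ must never by itself form a spurious non-trivial MEC, which holds since it has out-degree one) and that the random-attractor computations used inside Algorithm~\ref{alg:decr_mec} behave the same way on $P'$ as on $P$ — i.e.\ that $\attr{R}{\cdot}$ in $P'$, restricted back to $V$, equals $\attr{R}{\cdot}$ in $P$. This is true because the subdivision vertices are player-1 vertices of out-degree one, so they are pulled into an attractor exactly when their unique successor is, matching the behaviour of the original random edge; but spelling this out is the one place where a careful, if routine, case analysis is unavoidable. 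Everything else is a direct citation of the results already established in this section.
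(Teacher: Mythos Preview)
Your treatment of gap~(i) is correct and is essentially what the paper does: Algorithm~\ref{alg:decr_mec} already computes the lists $J$ and $K$ of new SCCs and determines which of them survive as MECs, so exposing these as the return value of $\deleteannounce{\cdot}$ is a bookkeeping change with $O(m\log n)$ additional cost, exactly parallel to the second bullet of Corollary~\ref{cor:returnnewsccs}. The paper gives no explicit proof of this lemma and treats it as immediate from the construction in Section~\ref{sec:decmec}; your argument for (i) is the right justification.

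Gap~(ii), however, is a misreading of the statement. The lemma explicitly restricts $\deleteannounce{E}$ to sets of \emph{player-1} edges; it does not promise to handle arbitrary edges, so there is nothing to bridge here. The concern you are addressing---how to remove a bad vertex that may be random, given that only player-1 deletions are available---is handled in the paper \emph{after} this lemma, in the ``Deleting bad vertices'' paragraph and Lemma~\ref{lem:streetmec:modifiedinstance}, and by a different construction: each vertex $v$ is split into $v_{in}\in V_1'$ and $v_{out}$, connected by the single player-1 edge $(v_{in},v_{out})$, and deleting that edge suffices to excise $v$. Your edge-subdivision gadget (inserting a fresh player-1 vertex on every random edge) is a plausible alternative for that later step and appears to give the same MEC correspondence for whole-vertex removal, but it does not belong in the proof of the present lemma. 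For Lemma~\ref{cor:returnnewmecs} itself, only your gap~(i) argument is needed, and it is adequate.
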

\para{Deleting bad vertices.}
As the decremental MEC-algorithm only allows deletion of player-1 edges, we first modify the original instance $P = (V,E,\langle V_1,V_R \rangle, \delta)$ to a new instance $P' =(V',E',\langle V_1',V_R' \rangle, \delta')$ such that we can remove bad vertices by
deleting player-1 edges only.
In $P'$ each vertex $v \in V_x$ for $x \in \{1, R\}$
is split into two vertices $v_{in} \in V_1'$ and $v_{out} \in V_x'$ such that 
$E'=\{ (u_{out},v_{in}) \mid (u,v) \in E\} \cup \{(v_{in}, v_{out})\mid v \in V\}$
and $L_i' = \{ v_{in}  \in V' \mid v \in L_i \}$ and $U_i' = \{ v_{out} \in V' \mid v \in U_i \}$ for all $1 \leq i \leq k$.  
The new probability distribution is $\delta'(v_{out})[w_{in}] = \delta(v)[w]$ for $v \in V_R$ and
$w \in \Out{v}$. 
Note that for each $v \in V_R$ the corresponding vertex $v_{out} \in
V_R' $ has the same probabilities to reach the representation $v_{out}$ of a vertex as $v$.
The described reduction allows us to remove bad vertices from MECs by removing the
player-1 edge $(v_{in},v_{out})$. 

The key idea for the following lemma is that for each original vertex $v \in V$
either both $v_{in}$ and $v_{out}$ are part of a good end-component or none of
them. 
Note that the only way that $v_{in}$ and $v_{out}$ are strongly connected is when the other
vertex is also in the strongly connected component because $v_{in}$ ($v_{out}$) has only one
outgoing (incoming) edges to $v_{out}$ (from $v_{in}$).

\begin{lemma}\label{lem:streetmec:modifiedinstance}
	There is a good end-component in the modified instance $P'$ iff there is a good
	component in the original instance $P$.
\end{lemma}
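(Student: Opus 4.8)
The plan is to establish the two directions of the equivalence by exploiting the structural constraint that in $P'$ each gadget vertex $v_{in}$ has $v_{out}$ as its unique successor and $v_{out}$ has $v_{in}$ as its unique predecessor. First I would fix notation: for a set $X' \subseteq V'$ let $\pi(X') = \{ v \in V \mid v_{in} \in X' \text{ or } v_{out} \in X'\}$ be its projection to $V$, and for $X \subseteq V$ let $\iota(X) = \{v_{in}, v_{out} \mid v \in X\}$ be its lift. The key preliminary observation, which I would state and prove first, is that if $X' \subseteq V'$ induces a non-trivial strongly connected subgraph then for every $v$ with $v_{in} \in X'$ we also have $v_{out} \in X'$ and conversely: indeed $v_{in}$'s only outgoing edge goes to $v_{out}$, so if $v_{in}$ can reach some other vertex of $X'$ it must do so through $v_{out}$, forcing $v_{out}\in X'$; the symmetric argument handles $v_{out}$. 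Hence every non-trivial end-component of $P'$ is of the form $\iota(X)$ for $X = \pi(X') \subseteq V$.

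Next I would verify that $\iota(X)$ is an end-component of $P'$ iff $X$ is an end-component of $P$. For the strong-connectivity condition: a path in $P$ through $X$ lifts edge-by-edge to a path in $P'$ through $\iota(X)$ (each edge $(u,v)$ becoming $u_{out} \to v_{in} \to v_{out}$), and conversely a cycle in $P'[\iota(X)]$ projects to a closed walk in $P[X]$; combined with the preliminary observation this gives the equivalence of strong connectivity. For the random-edge closure condition: the random vertices of $P'$ are exactly the $v_{out}$ for $v \in V_R$, and the out-edges of $v_{out}$ are $\{(v_{out}, w_{in}) \mid (v,w) \in E\}$, which all land inside $\iota(X)$ iff all out-edges of $v$ in $P$ land inside $X$; the vertices $v_{in}$ are player-1 vertices and impose no closure constraint. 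Thus $\MEC$s of $P'$ correspond bijectively (via $\pi$/$\iota$) to non-trivial $\MEC$s of $P$, together with the trivial player-1 singletons which are irrelevant to good-component detection.

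Finally I would translate the Streett/goodness labels through this correspondence. Recall $L_i' = \iota(L_i) \cap V_1' = \{v_{in} \mid v \in L_i\}$ and $U_i' = \{v_{out} \mid v \in U_i\}$. Given a non-trivial end-component $\iota(X)$ of $P'$: $L_i' \cap \iota(X) \neq \emptyset$ iff some $v \in L_i$ has $v_{in} \in \iota(X)$, which by the preliminary observation holds iff $L_i \cap X \neq \emptyset$; similarly $U_i' \cap \iota(X) \neq \emptyset$ iff $U_i \cap X \neq \emptyset$. Hence $\iota(X)$ is a good end-component of $P'$ (i.e.\ for all $i$, $L_i'\cap\iota(X)=\emptyset$ or $U_i'\cap\iota(X)\neq\emptyset$) iff $X$ is a good component of $P$. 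Putting the three pieces together: if $P$ has a good component $X$, then $\iota(X)$ is a good end-component of $P'$; conversely a good end-component of $P'$, being non-trivial, equals $\iota(X)$ for $X = \pi(X')$, and then $X$ is a good component of $P$.

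I expect the main obstacle to be the bookkeeping in the preliminary observation and its use to rule out "mixed" strongly connected sets — one must be careful that the argument really uses the uniqueness of the gadget edges and does not silently assume the conclusion. Everything after that is routine translation along the bijection $\pi/\iota$.
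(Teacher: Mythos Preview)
Your proposal is correct and follows essentially the same approach as the paper. The paper's argument is only sketched: it states that for each $v\in V$ either both $v_{in}$ and $v_{out}$ belong to a good end-component or neither does, because $v_{in}$ has $v_{out}$ as its unique out-neighbour and $v_{out}$ has $v_{in}$ as its unique in-neighbour, which is precisely your preliminary observation; your steps~2 and~3 make explicit the routine bijection and label-translation that the paper leaves implicit.
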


\noindent On the modified instance $P'$ the algorithm for MDPs is identical to Algorithm~\ref{alg:goodcomp} except that
we use a dynamic MEC algorithm instead of a dynamic SCC algorithm.

\begin{theorem}\label{thm:mdp_Streett}
	In an MDP the winning set for a $k$-pair Street objectives can be computed in $\O(m + b)$ expected time. 
\end{theorem}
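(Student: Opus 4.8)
The plan is to assemble the three-step procedure described just before the theorem and verify that each step runs in the claimed time and that the composition is correct. First I would invoke Theorem~\ref{thm:mec} to compute the MEC-decomposition of $P$ in $\O(m)$ expected time; this handles step~1. For step~2, for each MEC $C$ I would run the good-end-component detection on the modified instance $P'$ obtained by the vertex-splitting reduction: each vertex $v$ becomes $v_{in}\in V_1'$ and $v_{out}\in V_x'$ with the edge $(v_{in},v_{out})$, so that ``removing a bad vertex'' becomes ``deleting the player-1 edge $(v_{in},v_{out})$'', which the decremental MEC data structure of Lemma~\ref{cor:returnnewmecs} supports. Running Algorithm~\ref{alg:goodcomp} on $P'$ with the dynamic MEC algorithm substituted for the dynamic SCC algorithm, the running-time analysis of Lemma~\ref{lem:goodcomp:runningtime} goes through verbatim with $\decrSCC$ replaced by the $\O(m)$ bound of Lemma~\ref{cor:returnnewmecs}, since $P'$ still has $O(m+n)$ vertices and edges and $b'=\Theta(b)$; this gives $\O(m+b)$ for the good-end-component detection across all MECs (the $\sum_C$ telescopes because the MECs partition the vertices). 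Correctness of this step is exactly Lemma~\ref{lem:streetmec:modifiedinstance} together with Proposition~\ref{lem:goodcomp:correctness} (adapted to MECs). Finally, for step~3 I would apply Theorem~\ref{thm:asreach} to compute $\ASW{\Reach{T}}$ where $T$ is the union of the satisfying MECs, in $\O(m)$ expected time.

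For correctness of the overall scheme I would appeal to the standard characterization (as cited from~\cite[Chapter~9]{AH04} and used in~\cite{CDHL16}): a vertex is in the almost-sure winning set for a Streett objective in an MDP iff it can almost-surely reach a MEC that contains a good end-component, i.e. a sub-end-component $X$ with $L_j\cap X=\emptyset$ or $U_j\cap X\neq\emptyset$ for all $j$. The ``if'' direction uses that inside a MEC player~1 can almost-surely reach any vertex, so she can almost-surely reach $X$ and then, by alternating, visit all of $X$ infinitely often while avoiding leaving it; the ``only if'' direction uses that any almost-surely winning play must, with probability~1, settle into a bottom MEC of the strategy's Markov chain, whose vertex set is contained in a MEC of $P$ and must itself satisfy the good-component condition. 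Since this characterization is quoted as known, I would only cite it and focus on verifying that steps~1--3 correctly realize it: step~1 produces the MECs, step~2 correctly labels exactly those MECs containing a good end-component (via Lemma~\ref{lem:streetmec:modifiedinstance}), and step~3 computes exactly almost-sure reachability to that set.

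The remaining issue, which I would address last, is the oblivious-adversary requirement of the decremental MEC data structure (inherited from~\cite{BPW19} through Corollary~\ref{cor:returnnewsccs} and Theorem~\ref{thm:mec_decr}): the sequence of player-1 edge deletions performed by the Streett algorithm must not depend on the data structure's random choices. As in Proposition~\ref{prop:goodcomp:obadv} and Proposition~\ref{prop:mec_decr_oblivious}, the only randomness-dependent output is the \emph{order} of the returned new MECs, and the algorithm neutralizes this by sorting each returned list of MECs by smallest vertex id before processing, so the set of bad vertices identified next---and hence the next batch of deleted edges---is a deterministic function of the current graph only. I expect this bookkeeping to be the main obstacle in the sense that it is the only non-routine point: everything else is a direct substitution into already-proven lemmas. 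Putting the pieces together, summing $\O(m)+\O(m+b)+\O(m)=\O(m+b)$ expected time, and noting the bound holds against an oblivious adversary, yields Theorem~\ref{thm:mdp_Streett}. \qed
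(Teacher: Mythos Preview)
Your proposal is correct and follows essentially the same approach as the paper: the paper does not give an explicit proof environment for Theorem~\ref{thm:mdp_Streett} but instead builds the argument in the surrounding subsection, and your write-up is a faithful synthesis of those pieces---the three-step procedure, the vertex-splitting reduction with Lemma~\ref{lem:streetmec:modifiedinstance}, the substitution of the decremental MEC data structure (Lemma~\ref{cor:returnnewmecs}) for the decremental SCC algorithm in Algorithm~\ref{alg:goodcomp}, the reuse of the running-time analysis of Lemma~\ref{lem:goodcomp:runningtime}, Theorem~\ref{thm:asreach} for the final reachability step, and the oblivious-adversary sorting argument paralleling Propositions~\ref{prop:goodcomp:obadv} and~\ref{prop:mec_decr_oblivious}. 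One minor wording point: the MECs are pairwise disjoint but need not cover all vertices, so ``partition'' is slightly imprecise; the telescoping you want follows from disjointness alone.
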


\bibliography{streett}

\begin{thebibliography}{10}

\bibitem{AH04}
R.~Alur and T.A. Henzinger.
\newblock Computer-{A}ided {V}erification.
\newblock unpublished., 2004.
\newblock URL:
  \url{https://web.archive.org/web/20041207121830/http://www.cis.upenn.edu/group/cis673/}.

\bibitem{baierbook}
C.~Baier and J.P. Katoen.
\newblock {\em Principles of model checking}.
\newblock MIT Press, 2008.

\bibitem{B80}
C.~Beeri.
\newblock On the {M}embership {P}roblem for {F}unctional and {M}ultivalued
  {D}ependencies in {R}elational {D}atabases.
\newblock {\em {ACM} Trans. Database Syst.}, 5(3):241--259, 1980.
\newblock \href {http://dx.doi.org/10.1145/320613.320614}
  {\path{doi:10.1145/320613.320614}}.

\bibitem{BPW19}
A.~Bernstein, M.~Probst, and C.~Wulff{-}Nilsen.
\newblock Decremental {S}trongly-{C}onnected {C}omponents and {S}ingle-{S}ource
  {R}eachability in {N}ear-{L}inear {T}ime.
\newblock In {\em STOC}, pages 365--376, 2019.
\newblock \href {http://dx.doi.org/10.1145/3313276.3316335}
  {\path{doi:10.1145/3313276.3316335}}.

\bibitem{CDHL16}
K.~Chatterjee, W.~Dvo\v{r}{\'{a}}k, M.~Henzinger, and V.~Loitzenbauer.
\newblock Model and {O}bjective {S}eparation with {C}onditional {L}ower
  {B}ounds: {D}isjunction is {H}arder than {C}onjunction.
\newblock In {\em LICS}, pages 197--206, 2016.
\newblock \href {http://dx.doi.org/10.1145/2933575.2935304}
  {\path{doi:10.1145/2933575.2935304}}.

\bibitem{ChatterjeeGK13}
K.~Chatterjee, A.~Gaiser, and J.~Kret{\'{\i}}nsk{\'{y}}.
\newblock Automata with {G}eneralized {R}abin {P}airs for {P}robabilistic
  {M}odel {C}hecking and {LTL} {S}ynthesis.
\newblock In {\em CAV}, pages 559--575, 2013.
\newblock \href {http://dx.doi.org/10.1007/978-3-642-39799-8_37}
  {\path{doi:10.1007/978-3-642-39799-8_37}}.

\bibitem{CH11}
K.~Chatterjee and M.~Henzinger.
\newblock Faster and {D}ynamic {A}lgorithms for {M}aximal {E}nd-{C}omponent
  {D}ecomposition and {R}elated {G}raph {P}roblems in {P}robabilistic
  {V}erification.
\newblock In {\em SODA}, pages 1318--1336, 2011.
\newblock \href {http://dx.doi.org/10.1137/1.9781611973082.101}
  {\path{doi:10.1137/1.9781611973082.101}}.

\bibitem{ChatterjeeH12}
K.~Chatterjee and M.~Henzinger.
\newblock {An \ensuremath{O(n^2)} Time Algorithm for Alternating B\"{u}chi
  Games}.
\newblock In {\em SODA}, pages 1386--1399, 2012.
\newblock URL:
  \url{http://portal.acm.org/citation.cfm?id=2095225&CFID=63838676&CFTOKEN=79617016}.

\bibitem{CH2014}
K.~Chatterjee and M.~Henzinger.
\newblock Efficient and {D}ynamic {A}lgorithms for {A}lternating {B}\"{u}chi
  {G}ames and {M}aximal {E}nd-{C}omponent {D}ecomposition.
\newblock {\em J. ACM}, 61(3):15.1--15.40, 2014.
\newblock \href {http://dx.doi.org/10.1145/2597631}
  {\path{doi:10.1145/2597631}}.

\bibitem{ChatterjeeHL15}
K.~Chatterjee, M.~Henzinger, and V.~Loitzenbauer.
\newblock {Improved {A}lgorithms for {O}ne-{P}air and $k$-{P}air {S}treett
  {O}bjectives}.
\newblock In {\em LICS}, pages 269--280, 2015.
\newblock \href {http://dx.doi.org/10.1109/LICS.2015.34}
  {\path{doi:10.1109/LICS.2015.34}}.

\bibitem{CHL17}
K.~Chatterjee, M.~Henzinger, and V.~Loitzenbauer.
\newblock Improved {A}lgorithms for {P}arity and {S}treett objectives.
\newblock {\em Logical Methods in Computer Science}, 13(3):1--27, 2017.
\newblock \href {http://dx.doi.org/10.23638/LMCS-13(3:26)2017}
  {\path{doi:10.23638/LMCS-13(3:26)2017}}.

\bibitem{CHILP16}
S.~Chechik, T.~D. Hansen, G.~F. Italiano, J.~Lacki, and N.~Parotsidis.
\newblock Decremental {S}ingle-{S}ource {R}eachability and {S}trongly
  {C}onnected {C}omponents in {{\~O}}(m{\(\surd\)}n) {T}otal {U}pdate {T}ime.
\newblock In {\em FOCS}, pages 315--324, 2016.
\newblock \href {http://dx.doi.org/10.1109/FOCS.2016.42}
  {\path{doi:10.1109/FOCS.2016.42}}.

\bibitem{LIQUOR}
F.~Ciesinski and C.~Baier.
\newblock {LiQuor}: {A} {T}ool for {Q}ualitative and {Q}uantitative {L}inear
  {T}ime {A}nalysis of {R}eactive {S}ystems.
\newblock In {\em QEST}, pages 131--132, 2006.
\newblock \href {http://dx.doi.org/10.1109/QEST.2006.25}
  {\path{doi:10.1109/QEST.2006.25}}.

\bibitem{NUSMV}
A.~Cimatti, E.~Clarke, F.~Giunchiglia, and M.~Roveri.
\newblock {NUSMV}: {A} new {S}ymbolic {M}odel {C}hecker.
\newblock {\em International Journal on Software Tools for Technology Transfer
  (STTT)}, 2(4):410--425, 2000.
\newblock \href {http://dx.doi.org/10.1007/s100090050046}
  {\path{doi:10.1007/s100090050046}}.

\bibitem{ClarkeBook}
E.~M. Clarke, Jr., O.~Grumberg, and D.~A. Peled.
\newblock {\em Model Checking}.
\newblock MIT Press, Cambridge, MA, USA, 1999.

\bibitem{CourcoubetisY95}
C.~Courcoubetis and M.~Yannakakis.
\newblock The {C}omplexity of {P}robabilistic {V}erification.
\newblock {\em J. ACM}, 42(4):857--907, 1995.
\newblock \href {http://dx.doi.org/10.1145/210332.210339}
  {\path{doi:10.1145/210332.210339}}.

\bibitem{STORM}
C.~Dehnert, S.~Junges, J.P. Katoen, and M.~Volk.
\newblock A {Storm} is {Coming}: {A} {Modern} {Probabilistic} {Model}
  {Checker}.
\newblock In {\em CAV}, pages 592--600, 2017.
\newblock \href {http://dx.doi.org/10.1007/978-3-319-63390-9_31}
  {\path{doi:10.1007/978-3-319-63390-9_31}}.

\bibitem{EsparzaK14}
J.~Esparza and J.~Kret{\'{\i}}nsk{\'{y}}.
\newblock From {LTL} to {D}eterministic {A}utomata: {A} {S}afraless
  {C}ompositional {A}pproach.
\newblock In {\em CAV}, pages 192--208, 2014.
\newblock \href {http://dx.doi.org/10.1007/978-3-319-08867-9_13}
  {\path{doi:10.1007/978-3-319-08867-9_13}}.

\bibitem{HT96}
M.~Rauch Henzinger and J.~A. Telle.
\newblock Faster {A}lgorithms for the {N}onemptiness of {S}treett {A}utomata
  and for {C}ommunication {P}rotocol {P}runing.
\newblock In {\em SWAT}, pages 16--27, 1996.
\newblock \href {http://dx.doi.org/10.1007/3-540-61422-2\_117}
  {\path{doi:10.1007/3-540-61422-2\_117}}.

\bibitem{SPIN}
G.~J. Holzmann.
\newblock The {M}odel {C}hecker {SPIN}.
\newblock {\em IEEE Trans. Softw. Eng.}, 23(5):279--295, 1997.
\newblock \href {http://dx.doi.org/10.1109/32.588521}
  {\path{doi:10.1109/32.588521}}.

\bibitem{I81}
N.~Immerman.
\newblock Number of {Q}uantifiers is {B}etter {T}han {N}umber of {T}ape
  {C}ells.
\newblock {\em J. Comput. Syst. Sci.}, 22(3):384--406, 1981.
\newblock \href {http://dx.doi.org/10.1016/0022-0000(81)90039-8}
  {\path{doi:10.1016/0022-0000(81)90039-8}}.

\bibitem{KomarkovaK14}
Z.~Kom{\'{a}}rkov{\'{a}} and J.~Kret{\'{\i}}nsk{\'{y}}.
\newblock Rabinizer 3: {S}afraless {T}ranslation of {LTL} to {S}mall
  {D}eterministic {A}utomata.
\newblock In {\em ATVA}, pages 235--241, 2014.
\newblock \href {http://dx.doi.org/10.1007/978-3-319-11936-6_17}
  {\path{doi:10.1007/978-3-319-11936-6_17}}.

\bibitem{PRISM}
M.~Z. Kwiatkowska, G.~Norman, and D.~Parker.
\newblock {PRISM} 4.0: {V}erification of {P}robabilistic {R}eal-{T}ime
  {S}ystems.
\newblock In {\em CAV}, pages 585--591, 2011.
\newblock \href {http://dx.doi.org/10.1007/978-3-642-22110-1_47}
  {\path{doi:10.1007/978-3-642-22110-1_47}}.

\bibitem{MPProgress}
Z.~Manna and A.~Pnueli.
\newblock Temporal {V}erification of {R}eactive {S}ystems: {P}rogress (draft).
\newblock \url{http://theory.stanford.edu/~zm/tvors3.html}, 1996.

\bibitem{Safra88}
S.~Safra.
\newblock On the {C}omplexity of $\omega$-{A}utomata.
\newblock In {\em {{FOCS}}}, pages 319--327, 1988.
\newblock \href {http://dx.doi.org/10.1109/SFCS.1988.21948}
  {\path{doi:10.1109/SFCS.1988.21948}}.

\bibitem{Segala95}
R.~Segala.
\newblock {\em Modeling and Verification of Randomized Distributed Real-Time
  Systems}.
\newblock PhD thesis, MIT, 1995.

\bibitem{tarjan1972depth}
R.~Tarjan.
\newblock Depth-{F}irst {S}earch and {L}inear {G}raph {A}lgorithms.
\newblock {\em {SIAM} J. Comput.}, 1(2):146--160, 1972.

\bibitem{Vardi85}
M.~Y. Vardi.
\newblock Automatic {V}erification of {P}robabilistic {C}oncurrent
  {F}inite-{S}tate {P}rograms.
\newblock In {\em {FOCS}}, pages 327--338, 1985.
\newblock \href {http://dx.doi.org/10.1109/SFCS.1985.12}
  {\path{doi:10.1109/SFCS.1985.12}}.

\end{thebibliography}
\end{document}